\newcommand{\fracfloat}{.87} % fraction of \textwidth in figures
\newtheorem{theorem}{Theorem}
\newtheorem{lemma}[theorem]{Lemma}
\theoremstyle{definition}
\theoremstyle{remark}
\newtheorem{example}{Example}
\def\argmin{\mathop{\rm arg\,min}\limits}%
\def\argmax{\mathop{\rm arg\,max}\limits}%
\let\hat\widehat
\let\tilde\widetilde
\def\given{{\,|\,}}
\def\Pr{{\ensuremath{\mathbb P}}}%
\def\Exp{{\ensuremath{\mathbb E}}}%
\begin{document}
\begin{frontmatter}%
\title{Bayesian Centroid Estimation for Motif Discovery}
\runtitle{Bayesian Centroid Estimation for Motif Discovery}%

\begin{aug}
\author{\fnms{Luis} \snm{Carvalho}%
\thanksref{t1}\ead[label=e1]{lecarval@math.bu.edu}}
\thankstext{t1}{Supported by NSF grant DMS-1107067.}
\runauthor{L. Carvalho}
\affiliation{Boston University}
\address{Department of Mathematics and Statistics\\
Boston University\\
Boston, Massachusetts 02215\\
\printead{e1}}
\end{aug}

\begin{abstract}
Biological sequences may contain patterns that are signal important
biomolecular functions; a classical example is regulation of gene expression
by transcription factors that bind to specific patterns in genomic promoter
regions. In motif discovery we are given a set of sequences that share a
common motif and aim to identify not only the motif composition, but also the
binding sites in each sequence of the set. We present a Bayesian model that is
an extended version of the model adopted by the Gibbs motif sampler, and
propose a new centroid estimator that arises from a refined and meaningful
loss function for binding site inference. We discuss the main advantages of
centroid estimation for motif discovery, including computational convenience,
and how its principled derivation offers further insights about the posterior
distribution of binding site configurations. We also illustrate, using
simulated and real datasets, that the centroid estimator can differ from the
maximum a posteriori estimator.
\end{abstract}

\begin{keyword}
\kwd{Gibbs sampling}
\kwd{stochastic backtracking}
\end{keyword}

\begin{keyword}[class=AMS]
\kwd{62F15} % Bayesian inference
\kwd{62P10} % Applications to biology and medical sciences
\kwd{65C05} % Monte Carlo methods
\end{keyword}

\end{frontmatter}%

\section{Introduction}
In motif discovery we are given a set of sequences that share a common motif
and aim to identify the motif composition---the frequency of symbols for each
position in the pattern---and the positions in each sequence where the motifs
are. It is assumed that the motifs are significantly different, in
composition, from sequence background. This problem has gained attention and
relevance in the past 25 years mainly due to biological applications; a
classical example is regulation of gene expression by transcription factors
that bind to specific motifs in genomic promoter regions
\citep{macisaac06,guhathakurta06,sandve06}.
For this reason, we refer to the positions where the motifs are realized in
the sequences as ``binding sites''.

Due to its importance, hundreds of procedures have been proposed for motif
discovery \citep{hu05,tompa05}. While some approaches seek to characterize
motifs and their binding sites using dictionary methods that capture
over-representation of words as evidence \citep{regnier04,pavesi04}, it is
common to represent motif compositions by a position weight matrix
\citep{stormo00} and specify a parametric model where sequences are generated
conditionally on motif and background compositions and binding sites. Binding
sites can then be regarded as missing data; parameters for the compositions
can be estimated using expectation-maximization \citep{dempster77} in a
frequentist setup, as in MEME \citep{bailey95}, or assigned a prior
distribution in a Bayesian setup \citep{lawrence93,neuwald95}.

Following the Bayesian model from \citep{liu95}, we assume that there is only
one motif of \emph{fixed} length $L$ and that sequences are generated
conditionally independently according to a product multinomial model given
binding site positions and motif and background compositions. Thus, for an
alphabet $\mathcal{S}$, we define $\theta_0 = (\theta_{0,s})_{s \in
\mathcal{S}}$ as background probabilities of generating each letter in
$\mathcal{S}$ and, for each position $i = 1, \ldots, L$ in the motif,
$\theta_i = (\theta_{i,s})_{s \in \mathcal{S}}$ as the probabilities of
generating each letter at the $i$-th position in the motif. To simplify the
notation we denote $\Theta = (\theta_0, \theta_1, \ldots, \theta_L)$. As in
\citep{liu95}, we set a conjugate Dirichlet prior for $\Theta$.

Product multinomial and product Dirichlet models are justified as a good
working, first approximation based on position independence. There are many
extensions to this model that consider DNA strand complementarity
\citep{roth98}, a more informative Markov structure for the background
composition \citep{liu01}, and an explicit representation of the number of
binding sites per sequence \citep{thijs02}. However, since we will be
discussing a new inferential procedure, we adopt an extended model that yields
a feasible computational method while still retaining a realistic
interpretation and allows us to focus the discussion on the proposed
estimator.

Motif discovery is considered a hard problem since motifs are usually short
relative to sequence length and have a composition that might be hard to
distinguish from background (see, for instance, \citep{hu05}.) It is then
imperative to rely on more refined, informative estimation methods that better
glean information from the posterior distribution of binding site
configurations. Discrete inferential methods with this goal have recently been
proposed, including the median probability model of \citet{barbieri04} and the
centroid estimator \citep{ding05,carvalho08}. Centroid estimation, in
particular, has been successfully used for motif discovery \citep{thompson07},
including models that account for sequence conservation \citep{newberg07}.

In this paper we present a Bayesian model for motif discovery on multiple
sequences with multiple possible binding sites and formalize a new flavor of
inference based on centroid estimation. As we will argue, the proposed
estimator offers a good representative of the posterior space of binding site
configurations; moreover, as a by-product of its derivation, we obtain
informative summaries of the distribution of posterior mass. We start the
discussion by addressing a simple case when there is only one sequence and we
accept only one binding site; next we extend the presentation to include
multiple binding sites; then, we treat the full case when $\Theta$ is random,
in a fully Bayesian approach. Finally, we offer some concluding remarks and
directions for future work in the last section.

\section{One sequence, one binding site}
Suppose we observe a sequence $R$, $|R| \doteq n$, and wish to infer the
location of the only binding site $Y$, $Y \in \{1, \ldots, n-L+1\}$. Setting a
non-informative prior on $Y$, $\Pr(Y) = (n - L + 1)^{-1}$, we have the
posterior:
\[
\Pr(Y \given R, \Theta) =
\frac{\Pr(R \given Y, \Theta) \Pr(Y \given \Theta)}
{\sum_{\tilde{Y}=1}^{n-L+1} \Pr(R \given \tilde{Y}, \Theta) \Pr(\tilde{Y}
\given \Theta)}
= \frac{\Pr(R \given Y, \Theta)}
{\sum_{\tilde{Y}=1}^{n-L+1} \Pr(R \given \tilde{Y}, \Theta)}.
\]
The likelihood, as previously stated, follows a product multinomial
distribution given $Y$:
\[
\Pr(R \given Y, \Theta) = \prod_{s \in \mathcal{S}}
\prod_{j \in BG} \theta_{0,s}^{I(R_j = s)}
\prod_{j=1}^L \theta_{j,s}^{I(R_{Y-j+1} = s)},
\]
where $j \in BG$ means position $j$ in background.

One traditional estimator is the MAP estimator,
\[
\hat{Y}_M = \argmax_{\tilde{Y}=1,\ldots,n-L+1}
\Pr(\tilde{Y} \given R, \Theta),
\]
but we argue for an estimator that accounts for differences in positions when
comparing binding site configurations. Using Bayesian decision theory
\citep{berger85} we look for an estimator that minimizes, on average, a more
refined loss function $H$:
\begin{equation}
\label{eq:estimator}
\hat{Y}_C = \argmin_{\tilde{Y}=1,\ldots,n-L+1}
\Exp_{Y \given R, \Theta} \big[ H(\tilde{Y}, Y) \big].
\end{equation}
We adopt a generalized Hamming loss $H$,
\[
H(\tilde{Y}, Y) = \sum_{i=1}^n h(l_i(\tilde{Y}), l_i(Y)),
\]
where $l_i(Y)$ returns the ``state'' of position $i$: if $i$ is a background
position, $l_i(Y) = 0$, otherwise $l_i(Y) = Y - i + 1$, that is, $l_i(Y)$
returns the position in the motif. Loss function $H$ compares configurations
position-wise according to $h$, which in turn compares states.
One option for $h$ when $\Theta$ is known is a probability distance, the
symmetric Kullback-Leibler distance,
\[
h(i,j) = D_{KL}(\theta_i \,||\, \theta_j) + D_{KL}(\theta_j \,||\, \theta_i)
= \sum_{s \in \mathcal{S}}
\theta_{i,s} \log \frac{\theta_{i,s}}{\theta_{j,s}}
+ \theta_{j,s} \log \frac{\theta_{j,s}}{\theta_{i,s}}, 
\]
for $i,j = 0, 1, \ldots, L$.

It is, however, not common to have such an informed loss function. An
alternative metric arises by simply allowing $\theta_{j,s} \doteq \theta_s \ne
\theta_{0,s}$ for all $s \in \mathcal{S}$ and $j = 1, \ldots, L$ in the motif.
In this case, if $m(i) \doteq I(i > 0)$ indicates if state $i$ is a motif
state,
\[
h(i,j) = h(m(i), m(j)) = I(m(i) \ne m(j)) \Bigg[
\sum_{s \in \mathcal{S}} 
\theta_s \log \frac{\theta_s}{\theta_{0,s}}
+ \theta_{0,s} \log \frac{\theta_{0,s}}{\theta_s} \Bigg]. 
\]
Since we are ultimately concerned with the argument of a minimum, as per
Equation~\ref{eq:estimator}, we can define the loss function up to a shift and
(positive) scale. Thus, for our inferential purposes it suffices to define
$h(i, j) = I(m(i) \ne m(j))$ to obtain a loss $H$ that accounts for overlap in
binding sites. Such metric is commonly adopted to measure binding site level
accuracy, as in the performance coefficients
in~\citep{pevzner00,hu05,tompa05}. From now on we will be focusing on this
minimally informed loss function.

Estimator $\hat{Y}_C$ is a \emph{generalized centroid estimator}; for
instance, if $h$ is a common zero-one loss, $h(i,j) = I(i \ne j)$, $H$
corresponds to Hamming loss, and thus $\hat{Y}_C$ is the regular centroid
estimator~\citep{ding05,carvalho08}. As \citet{carvalho08} argue, centroid
estimators more effectively represent the space since they are closer to
posterior means; in contrast, it can be shown that $\hat{Y}_M$ arises from a
zero-one loss function which yields the posterior mode \citep{besag86}.

Let us now derive more specific expressions for $H$ and $\hat{Y}_C$. We first
notice that if $|\tilde{Y}-Y| \ge L$ then the binding sites do not overlap and
so $H(\tilde{Y}, Y) = 2 \sum_{j=1}^L h(j,0) \doteq H^*$, the null overlap
distance between two configurations. Alternatively, when $|\tilde{Y}-Y| < L$
then
\begin{equation}
\label{eq:loss}
H(\tilde{Y}, Y) = 
\sum_{j=1}^{|\tilde{Y}-Y|} h(j, 0)
+ \sum_{j=L - |\tilde{Y}-Y| + 1}^L h(j, 0)
+ \sum_{j=1}^{L-|\tilde{Y}-Y|} h(j, j+|\tilde{Y}-Y|),
\end{equation}
since the common backgrounds in $\tilde{Y}$ and $Y$ do not affect
$H(\tilde{Y}, Y)$, the first two terms above account for the left and right
``tails'' where binding sites in one sequence are matched with background in
the other sequence, and the last term accounts for the overlap in binding
sites. We also note that $H(\tilde{Y}, Y)$ is actually a function of
$|\tilde{Y}-Y|$.

Instead of a loss function we can also define our estimator in terms of a
\emph{gain} function
$G(\tilde{Y}, Y)
\doteq 1 - H(\tilde{Y}, Y) / H^*$. Note that $0 \le
G(\tilde{Y}, Y) \le 1$; in particular, when $|\tilde{Y}-Y| \ge L$ there is no
gain, $G(\tilde{Y}, Y) = 0$, and if $\tilde{Y}=Y$ we have $G(\tilde{Y}, Y) =
1$. As a consequence, we can simply write $G(\tilde{Y}, Y) =
I(|\tilde{Y}-Y|<L) (1 - H(\tilde{Y}, Y)/H^*)$ with
$H$ from Equation~\ref{eq:loss}. Noting that $G$, like $H$, is also a function
of $|\tilde{Y}-Y|$, we obtain the following characterization:

\begin{theorem}
\label{thm:cent1}
The centroid estimator $\hat{Y}_C$ is
\[
\hat{Y}_C = \argmax_{\tilde{Y} = 1, \ldots, n - L + 1}
G(\tilde{Y}, \cdot) * \Pr(\cdot \given R, \Theta),
\]
a \emph{convolution} between $G$ and the posterior distribution on $Y$.
\end{theorem}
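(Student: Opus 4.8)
The plan is to start from the decision-theoretic definition of $\hat{Y}_C$ in Equation~\ref{eq:estimator} and convert the minimization of expected loss into a maximization of expected gain, after which the resulting expectation is recognized as a convolution. First I would invoke the defining relation $G(\tilde{Y}, Y) = 1 - H(\tilde{Y}, Y)/H^*$. Since $H^* = 2\sum_{j=1}^L h(j,0)$ is a fixed positive constant (with the minimally informed loss $h(j,0) = 1$, so $H^* = 2L > 0$) that depends on neither $\tilde{Y}$ nor $Y$, linearity of expectation gives $\Exp_{Y \given R, \Theta}\bigl[H(\tilde{Y}, Y)\bigr] = H^*\bigl(1 - \Exp_{Y \given R, \Theta}\bigl[G(\tilde{Y}, Y)\bigr]\bigr)$, an affine and strictly decreasing function of the expected gain. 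Minimizing the expected loss over $\tilde{Y}$ is therefore equivalent to maximizing the expected gain, so that
\[
\hat{Y}_C = \argmax_{\tilde{Y}=1,\ldots,n-L+1}
\Exp_{Y \given R, \Theta}\bigl[G(\tilde{Y}, Y)\bigr].
\]

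Next I would write this expectation explicitly as the finite sum $\sum_{Y=1}^{n-L+1} G(\tilde{Y}, Y)\,\Pr(Y \given R, \Theta)$ against the posterior distribution on $Y$. The crucial observation, already established in the text preceding the statement, is that both $H$ and $G$ depend on their arguments only through $|\tilde{Y} - Y|$. Writing $G(\tilde{Y}, Y) = g(\tilde{Y} - Y)$ for an even function $g$, the sum becomes $\sum_{Y} g(\tilde{Y} - Y)\,\Pr(Y \given R, \Theta)$, which is precisely the convolution $\bigl(g * \Pr(\cdot \given R, \Theta)\bigr)$ evaluated at $\tilde{Y}$. Taking the argmax over $\tilde{Y}$ then yields the claimed characterization.

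The step requiring the most care is the passage to convolution form. One must check that the difference-dependence of $G$ matches the convolution convention and account for the fact that $g$ is supported on $|\tilde{Y}-Y| < L$, vanishing outside that range, so that the sum is effectively finite and the convolution is well defined even though $Y$ ranges over a bounded index set rather than all of $\mathbb{Z}$. Because $g$ is even, the distinction between convolution and cross-correlation collapses, which is what licenses writing the operation with the symmetric $*$ in the statement. The remaining manipulations---expanding $H$ via Equation~\ref{eq:loss} and confirming $H^* > 0$---are routine.
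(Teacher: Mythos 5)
Your proposal is correct and takes essentially the same route as the paper's own proof: you convert the minimization of expected loss into maximization of expected gain via the affine relation $\Exp[H] = H^*\bigl(1 - \Exp[G]\bigr)$ with $H^* > 0$ constant, then expand the expectation as a finite sum against the posterior and recognize it as a convolution supported on $|\tilde{Y} - Y| < L$. Your extra care about the convolution convention and the evenness of the gain kernel is a detail the paper glosses over, but it does not alter the substance of the argument.
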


\begin{proof}
The result follows directly from the definition in
Equation~\ref{eq:estimator}:
\[
\begin{split}
\hat{Y}_C & = \argmin_{\tilde{Y} = 1, \ldots, n - L + 1}
\Exp_{Y \given R, \Theta} \big[ H(\tilde{Y}, Y) \big] \\
& = \argmax_{\tilde{Y} = 1, \ldots, n - L + 1}
\Exp_{Y \given R, \Theta} \big[
I(|\tilde{Y}-Y|<L) (1 - H(\tilde{Y}, Y)/H^*)
\big] \\
& = \argmax_{\tilde{Y} = 1, \ldots, n - L + 1}
\sum_{Y = \max\{1, \tilde{Y}-L+1\}}^{\min\{n-L+1, \tilde{Y}+L-1\}}
G(\tilde{Y}, Y) \Pr(Y \given R, \Theta) \\
& = \argmax_{\tilde{Y} = 1, \ldots, n - L + 1}
G(\tilde{Y}, \cdot) * \Pr(\cdot \given R, \Theta),
\end{split}
\]
as required.
\end{proof}

When contrasted to $\hat{Y}_M$ we can see the effect of having a higher
resolution loss function: $\hat{Y}_C$ gathers probability support from nearby,
relative to $H$, binding site configurations instead of just picking the most
likely configuration. The following example should give us some insight into
this new estimator.

\begin{example}
\label{ex:sbs}
Consider the following sequence of length $n=200$ from the nucleotide alphabet
$\mathcal{S} = \{$\texttt{A}, \texttt{C}, \texttt{G}, \texttt{T}$\}$,
\medskip
\begin{center}
\begin{BVerbatim}[boxwidth=auto]
        10        20        30        40        50
         |         |         |         |         |
GCCACTTTCGGGCCCGTGTCTAACGCACCACGGGCTACGTGACGGTGTGG
CTCTATACTGACGACGTGAACCAAGCTTTACTGAAGGACTTGCTGTTCCC
CGACCCATTTCCTGCCAGAACCTCTGACCAGTGTCTAGGGCTATCGCCCG
TGATGTCTCATGGCGACGCGCGAGGCGGTTGCTCGCCTCACTCCGTTCTG
\end{BVerbatim}
\end{center}
\medskip
and a motif of length $L=6$ with parameters $\Theta$ given by
Table~\ref{tab:theta}.
\begin{table}[bht]
\caption{Background and motif compositions: background is assumed to be
\texttt{CG}-rich, while the motif represents a canonical palindromic E-box,
\texttt{CACGTG} \citep{murrea89}.}
\label{tab:theta}
\begin{tabular*}{\textwidth}
{c@{\extracolsep{\fill}}c@{\extracolsep{\fill}}c@{\extracolsep{\fill}}c@{\extracolsep{\fill}}c@{\extracolsep{\fill}}c@{\extracolsep{\fill}}c@{\extracolsep{\fill}}c} \hline
$\mathcal{S}$ & $\theta_0$ & $\theta_1$ & $\theta_2$ & $\theta_3$ & $\theta_4$ & $\theta_5$ & $\theta_6$ \\ \hline
\texttt{A} & $0.2$ & $0.1$ & $0.7$ & $0.1$ & $0.1$ & $0.1$ & $0.1$ \\
\texttt{C} & $0.3$ & $0.7$ & $0.1$ & $0.7$ & $0.1$ & $0.1$ & $0.1$ \\
\texttt{G} & $0.3$ & $0.1$ & $0.1$ & $0.1$ & $0.7$ & $0.1$ & $0.7$ \\
\texttt{T} & $0.2$ & $0.1$ & $0.1$ & $0.1$ & $0.1$ & $0.7$ & $0.1$ \\ \hline
\end{tabular*}
\end{table}

Figure~\ref{fig:ex1} shows the conditional marginal posterior $\Pr(Y \given R,
\Theta)$ and the convolution $G * \Pr(\cdot \given R, \Theta)$ used to obtain
the centroid $\hat{Y}_C = 36$, binding at the subsequence \texttt{TACGTG},
close to the consensual motif. Note that since $\Theta$ is very informative
the posterior profile has clear peaks and in this case $\hat{Y}_c =
\hat{Y}_M$, the two estimators coincide.

\begin{figure}[hbt]
\includegraphics[width=\fracfloat\textwidth]{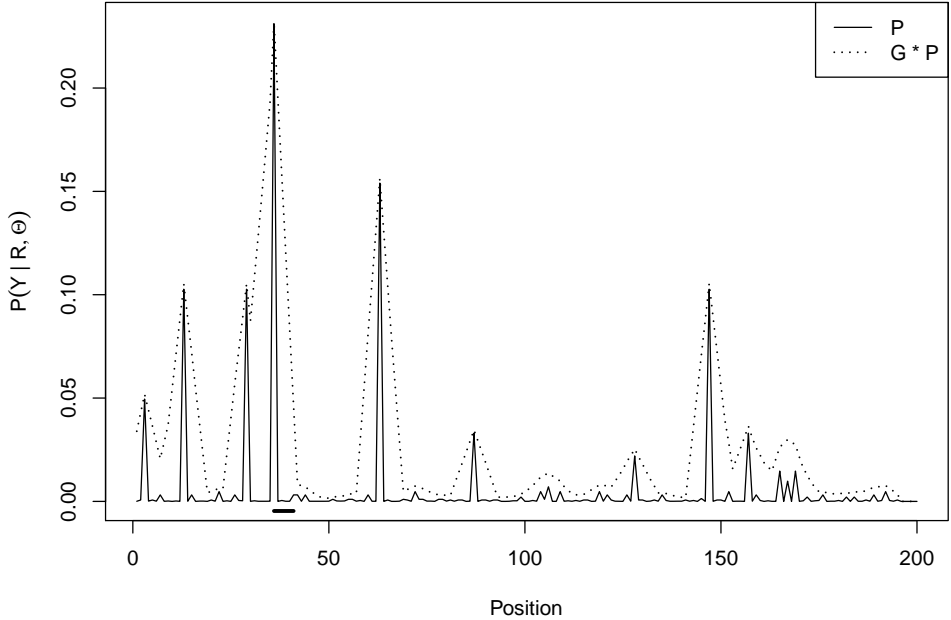}
\caption{Conditional marginal probability distribution $\Pr(Y \given R,
\Theta)$ in solid line and convolution $G * \Pr(\cdot \given R, \Theta)$ in
dotted line. The black thick line close to the axis marks the binding site
corresponding to the centroid $\hat{Y}_C$.}
\label{fig:ex1}
\end{figure}

\end{example}

\section{One sequence, multiple binding sites}
We now allow for multiple binding sites by defining $Y = \{Y_k\}$ as the
collection of binding sites $Y_k$. The likelihood is similar, but accounts for
the multiple binding sites:
\[
\Pr(R \given Y,\Theta) =
\prod_{s \in \mathcal{S}}
\prod_{i \in BG} \theta_{0,s}^{I(R_i=s)}
\prod_{k=1}^{|Y|} \prod_{i=1}^L \theta_{i,s}^{I(R_{Y_k+i-1}=s)}.
\]

Given the ``entropic'' effect of possibly having many binding sites, we need
to adopt a better prior for $Y$ that takes into account the number of possible
configurations for the binding sites. So, instead of naively electing $\Pr(Y)
\propto 1$, we can explore a hierarchical structure: if $c(Y) = |Y|$, the
number of binding sites in $Y$, we note that
$\Pr(Y) = \Pr(Y, c(Y)) = \Pr(Y \given c(Y)) \Pr(c(Y))$ and set
$\Pr(Y \given c(Y)) \propto 1$ and $\Pr(c(Y)) \propto 1$ to obtain
\[
\Pr(Y) = \Pr(Y \given c(Y)) \Pr(c(Y)) = \binom{n - c(Y)(L-1)}{c(Y)}^{-1} \cdot
\frac{1}{C},
\]
where $C \doteq \lfloor n/L \rfloor$ is the maximum number of binding sites
in $R$.

Another, possibly more familiar, approach is to adopt a Markov chain with two
states, background and motif, where the probability of transitioning to
background, either from background or motif, and of starting at background is
$p$. In this case we keep $\Pr(Y \given c(Y))$ as before, but now 
\begin{equation}
\label{eq:priorp}
\Pr(c(Y)) \propto \binom{n - c(Y)(L-1)}{c(Y)} p^{n-c(Y)L} (1-p)^{c(Y)},
\end{equation}
since there needs to be $c(Y)$ transitions to the motif state. This prior
structure offers more flexibility through $p$: we can further set a hyperprior
distribution on $p$, or specify it directly based on the expected number $b$
of binding sites in the sequence; if $n$ is large compared to $b$, as usual,
then $p$ should be close to one, $c(Y)$ is approximately Poisson with mean
$n(1-p)$ and thus $p \doteq 1 - b/n$ becomes a good candidate.

The posterior is then
\[
\begin{split}
\Pr(Y \given  R, \Theta) & = \frac{\Pr(R, Y \given \Theta)}
{\sum_{\tilde{Y}} \Pr(R, \tilde{Y} \given \Theta)} \\
& =
\underbrace{
\frac{\Pr(R, Y \given \Theta)}
{\sum_{\tilde{Y}:c(\tilde{Y})=c(Y)}
\Pr(R, \tilde{Y} \given \Theta)}}_{\Pr(Y \given c(Y), R, \Theta)}
\cdot
\underbrace{
\frac{\sum_{\tilde{Y}:c(\tilde{Y})=c(Y)} \Pr(R, \tilde{Y} \given \Theta)}
{\sum_{c=0}^C \sum_{\tilde{Y}:c(\tilde{Y})=c}
\Pr(R,\tilde{Y} \given \Theta)}}_{\Pr(c(Y) \given R, \Theta)}.
\end{split}
\]
By the structure of our prior it follows that
\begin{equation}
\label{eq:Yprior1}
\begin{split}
\Pr(Y \given c(Y), R, \Theta) & =
\frac{\Pr(R \given Y, \Theta) \Pr(Y)}
{\sum_{\tilde{Y}:c(\tilde{Y})=c(Y)} \Pr(R \given \tilde{Y}, \Theta)
\Pr(\tilde{Y})} \\
& = \frac{\Pr(R \given Y, \Theta)}
{\sum_{\tilde{Y}:c(\tilde{Y})=c(Y)} \Pr(R \given \tilde{Y}, \Theta)},
\end{split}
\end{equation}
and
\begin{equation}
\label{eq:Yprior2}
\begin{split}
\Pr(c(Y) \given R, \Theta) & =
\frac{\sum_{\tilde{Y}:c(\tilde{Y})=c(Y)} \Pr(R \given \tilde{Y}, \Theta)
\Pr(\tilde{Y})}
{\sum_{c=0}^C \sum_{\tilde{Y}:c(\tilde{Y})=c}
\Pr(R \given \tilde{Y}, \Theta) \Pr(\tilde{Y})} \\
& =
\frac{\sum_{\tilde{Y}:c(\tilde{Y})=c(Y)} \Pr(R \given \tilde{Y}, \Theta)
\Pr(\tilde{Y} \given c(\tilde{Y})) \Pr(c(\tilde{Y}))}
{\sum_{c=0}^C \sum_{\tilde{Y}:c(\tilde{Y})=c}
\Pr(R \given \tilde{Y}, \Theta) \Pr(\tilde{Y} \given c(\tilde{Y}))
\Pr(c(\tilde{Y}))}.
\end{split}
\end{equation}

This decomposition suggests a good approach to sampling from $\Pr(Y \given R,
\Theta)$: we first sample $c(Y)$ according to $\Pr(c(Y) \given R, \Theta)$ and
then sample $Y$ given the number of binding sites, according to
$\Pr(Y \given c(Y), R, \Theta)$.

As we will see next, we need to work more to obtain a centroid estimator for
the binding sites: we need to establish a hierarchical inferential structure
by first finding centroids for $c(Y) = 1, \ldots, C$ and then proceed to
estimate a global centroid. To this end we find $\Pr(c(Y) \given R, \Theta)$
and then compute marginal posteriors $\Pr(Y_k \given c(Y), R, \Theta)$.

\subsection{Marginal posterior on $c(Y)$}
\label{ssec:postcy}
From Equations~\ref{eq:Yprior1} and~\ref{eq:Yprior2} we observe that we need
to compute $\sum_{\tilde{Y}:c(\tilde{Y})=c} \Pr(R \given \tilde{Y}, \Theta)$
up to a constant to find both conditional posteriors of $c(Y)$ and $Y$ and
thus the posterior $\Pr(Y \given R, \Theta)$. Let us now denote by $R_{i:j}$
the subsequence of $R$ from positions $i$ to $j$ and by $Y_{i:j}$ the binding
sites in $Y$ between $i$ and $j$---that is, all $Y_k$ such that $i \le Y_k \le
j - L + 1$. If we then define \emph{forward sums}
\begin{equation}
F_{c,j} \doteq
\frac{\sum_{\tilde{Y}_{1:j}: c(\tilde{Y}_{1:j}) = c}
\Pr(R_{1:j} \given \tilde{Y}_{1:j}, \Theta)}
{\prod_{i=1}^j \prod_{s \in \mathcal{S}} \theta_{0,s}^{I(R_i=s)}}
\end{equation}
we have that
$\sum_{\tilde{Y}:c(\tilde{Y})=c} \Pr(R \given \tilde{Y},\Theta) \propto
F_{c,n}$. To further simplify the notation, let us define
\[
\lambda(j; \Theta) = 
\prod_{i=1}^L \prod_{s \in \mathcal{S}}
\Bigg(\frac{\theta_{i,s}}{\theta_{0,s}}\Bigg)^{I(R_{j-1+i}=s)},
\]
the composition ratio between motif and background for a binding site starting
at $j$.

The forward sums $F_{c,j}$ can be computed recursively,
\begin{equation}
\label{eq:fcjrec}
F_{c,j} = F_{c,j-1} + F_{c-1,j-L} \lambda(j-L+1; \Theta),
\end{equation}
by considering two options for the tail of the sequence: either having a
background position---and hence the first summand above---or by having a
binding site on the last $L$ positions---and thus requiring the second
summand.

Thus, we have
\begin{equation}
\label{eq:cY}
\Pr(c(Y) \given R, \Theta) =
\frac{F_{c(Y),n} \binom{n-c(Y)(L-1)}{c(Y)}^{-1} \Pr(c(Y))}
{\sum_{c=0}^C F_{c,n} \binom{n-c(L-1)}{c}^{-1}\Pr(c(Y)=c)},
\end{equation}
which yields a straightforward way to sample the posterior $c(Y)$ conditional
on $\Theta$.

\subsection{Marginal posterior on $Y_k$ given $c(Y)$}
\label{ssec:postykcy}
To compute $\Pr(Y_k \given c(Y), R, \Theta)$ we now need backward sums. We can
define them analogously to the forward sums:
\begin{equation}
B_{c,j} \doteq
\frac{\sum_{\tilde{Y}_{j:n}: c(\tilde{Y}_{j:n}) = c}
\Pr(R_{j:n} \given \tilde{Y}_{j:n}, \Theta)}
{\prod_{i=j}^n \prod_{s \in \mathcal{S}} \theta_{0,s}^{I(R_i=s)}},
\end{equation}
and hence 
$\sum_{\tilde{Y}:c(\tilde{Y})=c} \Pr(R \given \tilde{Y},\Theta) \propto
B_{c,1}$, as expected. Moreover, by a similar argument to the previous
subsection, we also have that the backward sums are recursive:
\begin{equation}
\label{eq:bcjrec}
B_{c,j} = B_{c,j+1} + B_{c-1,j+L} \lambda(j; \Theta).
\end{equation}

Having forward and backward sums enable us to readily compute the marginal
posterior on $Y_k$ conditional on $c(Y)$: since
\[
\begin{split}
\Pr(Y_k \given c(Y)=c, R, \Theta) & =
\sum_{Y_1, \ldots, Y_{k-1}, Y_{k+1}, \ldots, Y_c}
\Pr(Y \given c(Y)=c, R, \Theta) \\
& = \sum_{Y_1, \ldots, Y_{k-1}, Y_{k+1}, \ldots, Y_c}
\frac{\Pr(R \given Y, \Theta)}
{\sum_{\tilde{Y}:c(\tilde{Y})=c} \Pr(R \given \tilde{Y}, \Theta)},
\end{split}
\]
and 
\begin{multline*}
\sum_{Y_1, \ldots, Y_{k-1}, Y_{k+1}, \ldots, Y_c}
\Pr(R \given Y, \Theta) =
\sum_{Y_1, \ldots, Y_{k-1}}
\Pr(R_{1:Y_k-1} \given Y_{1:Y_k-1}, \Theta) \\
\cdot \Pr(R_{Y_k:Y_k+L-1} \given Y_{Y_k:Y_k+L-1}, \Theta)
\cdot \sum_{Y_{k+1}, \ldots, Y_c}
\Pr(R_{Y_k+L:n} \given Y_{Y_k+L:n}, \Theta),
\end{multline*}
and thus
\begin{equation}
\label{eq:Yk}
\Pr(Y_k \given c(Y)=c, R, \Theta) =
\frac{F_{k-1, Y_k-1} \lambda(Y_k; \Theta) B_{c-k, Y_k+L}}
{\sum_{\tilde{Y}_k = (k-1)L}^{n-(c-k+1)L+1}
F_{k-1, \tilde{Y}_k-1} \lambda(\tilde{Y}_k; \Theta) B_{c-k, \tilde{Y}_k+L}}.
\end{equation}
Note that
\begin{multline*}
\frac{\sum_{\tilde{Y}:c(\tilde{Y})=c} \Pr(R \given \tilde{Y}, \Theta)}
{\prod_{i=1}^n \prod_{s \in \mathcal{S}} \theta_{0,s}^{I(R_i = s)}} =
F_{c,n} = B_{c,1} \\
= \sum_{\tilde{Y}_k = (k-1)L}^{n-(c-k+1)L+1}
F_{k-1, \tilde{Y}_k-1} \lambda(\tilde{Y}_k; \Theta) B_{c-k, \tilde{Y}_k+L},
\end{multline*}
for $k = 1, \ldots, c$.

Before discussing posterior inference we summarize the results of this
section in Algorithm~\ref{alg:marg}.
\begin{algorithm}[htpb]
\caption{Computes $\Pr(c(Y) \given R, \Theta)$ and
$\Pr(Y_k \given c(Y), R, \Theta)$ for $k = 1, \ldots, c(Y)$.}
\label{alg:marg}
\begin{enumerate}[Step 1.]
\item \emph{(Initialize)} Set $F_{0,0} = B_{0,n+1} = F_{0,j} = B_{0,j} = 1$
for $j = 1, \ldots, n$; for $c = 1, \ldots, C$, set $F_{c,j} =
0$ when $j < cL$ and $B_{c,j} = 0$ when $j > n - cL + 1$.

\item \emph{(Compute forward sums)} For $c = 1, \ldots, C$ and $j = cL+1,
\ldots, n$ do: set $F_{c,j}$ as in Equation~\ref{eq:fcjrec},
\[
F_{c,j} = F_{c,j-1} + F_{c-1,j-L} \lambda(j-L+1; \Theta)
\]

\item \emph{(Compute $\Pr(c(Y) \given R, \Theta)$)} For $c = 0, \ldots, C$ do:
compute marginal posterior $c(Y)$ as in Equation~\ref{eq:cY},
\[
\Pr(c(Y) = c \given R, \Theta) =
\frac{F_{c,n} \binom{n-c(L-1)}{c}^{-1} \Pr(c(Y)=c)}
{\sum_{\tilde{c}=0}^C F_{\tilde{c},n}
\binom{n-\tilde{c}(L-1)}{\tilde{c}}^{-1} \Pr(c(Y)=\tilde{c})}
\]

\item \emph{(Compute backward sums)} For $c = 1, \ldots, C$ and $j = n-cL, \ldots, 1$
do: set $B_{c,j}$ as in Equation~\ref{eq:bcjrec},
\[
B_{c,j} = B_{c,j+1} + B_{c-1,j+L} \lambda(j; \Theta)
\]

\item \emph{(Compute $\Pr(Y_k \given c(Y), R, \Theta)$)} For $c = 1, \ldots,
C$, $k = 1, \ldots, c$, and $Y_k = (k-1)L+1, \ldots, n-(c-k+1)L+1$ do: compute
marginal posterior $Y_k$ given $c(Y)$ as in Equation~\ref{eq:Yk},
\[
\Pr(Y_k \given c(Y)=c, R, \Theta) =
F_{k-1, Y_k-1} \lambda(Y_k; \Theta) B_{c-k, Y_k+L} / F_{c,n}
\]
\end{enumerate}
\end{algorithm}

\subsection{Posterior Inference}
\label{ssec:postinf}
In contrast to the one binding site case from last section, posterior
inference is more difficult since comparing configurations with different
number of binding sites is not amenable to a systematic approach. Our first
approximation is to consider local estimators for each group of configurations
with a fixed number of binding sites and then appeal to a triangle inequality:
\[
H(Y, \hat{Y}) \le H(Y, \hat{Y}_c) + H(\hat{Y}_c, \hat{Y}),
\]
where $Y$ is a configuration with $c$ binding sites, $\hat{Y}_c$ is the
constrained estimator for all configurations with $c$ binding sites, and
$\hat{Y}$ is the (overall) centroid estimator. Recall that for the centroid
estimator we wish to find $\tilde{Y}$ that minimizes
\[
\Exp_{Y \given R, \Theta} \big[ H(\tilde{Y}, Y) \big] =
\sum_{c = 0}^C \sum_{Y : c(Y) = c}
H(\tilde{Y}, Y) \Pr(Y \given R, \Theta).
\]
Using the triangle inequality for each group we then have
\begin{multline}
\label{eq:Hbound}
\Exp_{Y \given R, \Theta} \big[ H(\tilde{Y}, Y) \big] \le
\sum_{c = 0}^C \sum_{Y : c(Y) = c}
\big[ H(\tilde{Y}, \tilde{Y}_c) + H(\tilde{Y}_c, Y) \big]
\Pr(Y \given R, \Theta) \\
= \sum_{c=0}^C \Bigg[
H(\tilde{Y}, \tilde{Y}_c)
+ \sum_{Y:c(Y)=c} H(\tilde{Y}_c, Y) \Pr(Y|c(Y)=c, R, \Theta)
\Bigg] \Pr(c(Y) = c \given R, \Theta),
\end{multline}
where $\tilde{Y}_c$ is an arbitrary point in $\{Y:c(Y)=c\}$. Our task is now
to find an estimator---let us still call it centroid---that minimizes the
right-hand bound in Equation~\ref{eq:Hbound} above. This goal suggests a
two-step strategy:
\begin{enumerate}[1.]
\item For each number of binding sites, $c = 1, \ldots, C$, find the
\emph{local} centroids
\begin{equation}
\label{eq:lcent}
\hat{Y}_c =
\argmin_{\tilde{Y}:c(\tilde{Y}) = c}
\Exp_{Y \given c(Y)=c, R, \Theta} \big[ H(\tilde{Y}, Y) \big]
\end{equation}
as the $\tilde{Y}_c$ in Equation~\ref{eq:Hbound}.

\item Find the \emph{global} centroid given the local centroids
$\{\hat{Y}_c\}_{c=1}^C$,
\begin{equation}
\label{eq:gcent}
\hat{Y} = \argmin_{\tilde{Y}}
\Exp_{c(Y) \given R, \Theta} \big[ H(\hat{Y}_{c(Y)}, \tilde{Y}) \big].
\end{equation}
\end{enumerate}

We note that this strategy does not guarantee that the bound is minimized; the
main goal here is computational convenience. Let us tackle each step of this
heuristic next.

\subsubsection{Local centroids}
Even when the number of binding sites is fixed, minimizing the conditional
posterior expectation of $H(\tilde{Y}, Y)$ can be challenging: we would still
have to consider for each candidate configuration $\tilde{Y}$ the posterior
probability of configurations with all binding sites to the left of the first
binding site in $\tilde{Y}$, in-between binding sites in $\tilde{Y}$, and so
on. We adopt another approximation and decide to minimize a \emph{paired}
Hamming loss $H_A$ where binding site positions are matched according to their
order:
\[
H_A(\tilde{Y}, Y) = \sum_{k=1}^{c(Y)} H_1(\tilde{Y}_k, Y_k),
\]
where $H_1(\tilde{Y}_k, Y_k)$ is Hamming loss when comparing sequences with only
one binding site at $\tilde{Y}_k$ and $Y_k$, respectively, that is, 
$H_1(\tilde{Y}_k, Y_k) = 2\max\{|\tilde{Y}_k - Y_k|, L\}$.
From the definition we have that $H_A$ upper bounds $H$: $H_A(\tilde{Y}, Y)
\ge H(\tilde{Y}, Y)$. As a bad approximation example, if $\tilde{Y}_k =
Y_{k+1}$ for $k = 1, \ldots, c(Y)-1$ then $H_A(\tilde{Y}, Y) = c(Y) L$, since
each pair of binding sites $\tilde{Y}_k$ and $Y_k$ does not overlap, while
$H(\tilde{Y}, Y) = 2L$ since only $Y_1$ and $\tilde{Y}_{c(Y)}$ are in
disagreement with background.

The next result adapts Theorem~\ref{thm:cent1} to yield the paired local
centroids.
\begin{lemma}
\label{lem:lcent}
If $\Pr_k(\cdot \given c(Y)=c, R, \Theta)$ is the marginal conditional
posterior on $Y_k$ then the paired local centroids are
\[
\hat{Y}_c = \argmax_{\tilde{Y}:c(\tilde{Y})=c}
\sum_{k=1}^c
G(\tilde{Y}_k, \cdot) * \Pr_k(\cdot \given c(Y)=c, R, \Theta)
\]
\end{lemma}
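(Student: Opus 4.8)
The plan is to substitute the paired surrogate loss $H_A$ for the true loss $H$ in the local-centroid definition~\ref{eq:lcent} and then exploit the additive structure of $H_A$ together with the single-site characterization of Theorem~\ref{thm:cent1}. Concretely, I would start from the tractable version of~\ref{eq:lcent},
\[
\hat{Y}_c = \argmin_{\tilde{Y}:c(\tilde{Y})=c}
\Exp_{Y \given c(Y)=c, R, \Theta}\big[ H_A(\tilde{Y}, Y) \big],
\]
which is precisely the approximation introduced above. The reason $H_A$ was adopted in the first place is that it is a sum of single-site losses $H_1(\tilde{Y}_k, Y_k)$ matched by order, so the objective ought to decouple into per-coordinate pieces that each fall under Theorem~\ref{thm:cent1}.

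First I would apply linearity of expectation to obtain
\[
\Exp_{Y \given c(Y)=c, R, \Theta}\big[ H_A(\tilde{Y}, Y) \big]
= \sum_{k=1}^{c}
\Exp_{Y \given c(Y)=c, R, \Theta}\big[ H_1(\tilde{Y}_k, Y_k) \big].
\]
The key observation is that, for fixed $\tilde{Y}_k$, the summand $H_1(\tilde{Y}_k, Y_k)$ depends on the full configuration $Y$ only through its $k$-th ordered binding site $Y_k$. Marginalizing the joint conditional posterior over the remaining coordinates therefore collapses each term to an expectation against the marginal conditional posterior,
\[
\Exp_{Y \given c(Y)=c, R, \Theta}\big[ H_1(\tilde{Y}_k, Y_k) \big]
= \sum_{Y_k} H_1(\tilde{Y}_k, Y_k)\,
\Pr_k(Y_k \given c(Y)=c, R, \Theta),
\]
which is exactly the one-site quantity supplied by Algorithm~\ref{alg:marg}.

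Next I would pass from losses to gains as in Theorem~\ref{thm:cent1}. Since $G(\tilde{Y}_k, Y_k) = 1 - H_1(\tilde{Y}_k, Y_k)/H^*$, taking expectations against $\Pr_k$ gives $G(\tilde{Y}_k, \cdot) * \Pr_k(\cdot \given c(Y)=c, R, \Theta) = 1 - (1/H^*)\,\Exp_{Y_k}\big[ H_1(\tilde{Y}_k, Y_k) \big]$, where the convolution is just this marginal expectation evaluated at $\tilde{Y}_k$. Summing over $k$ then yields
\[
\sum_{k=1}^{c} G(\tilde{Y}_k, \cdot) * \Pr_k(\cdot \given c(Y)=c, R, \Theta)
= c - \frac{1}{H^*}\,
\Exp_{Y \given c(Y)=c, R, \Theta}\big[ H_A(\tilde{Y}, Y) \big].
\]
Because $c$ and $H^{*}>0$ are constant over the feasible set $\{\tilde{Y}:c(\tilde{Y})=c\}$, minimizing the expected paired loss is equivalent to maximizing the sum of convolutions, and the constraint $c(\tilde{Y})=c$ carries over unchanged into the $\argmax$. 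This is the stated characterization.

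The point requiring care---more a subtlety than a deep obstacle---is the marginalization step: it is legitimate only because $H_A$, unlike the true loss $H$, splits into terms each involving a single ordered coordinate, which is exactly what reduces the joint expectation to the marginals $\Pr_k$. I would also emphasize that the additive objective does \emph{not} permit optimizing the $\tilde{Y}_k$ independently, since the feasibility constraint $c(\tilde{Y})=c$ (ordering and non-overlap of the binding sites) still couples the coordinates; the lemma rewrites the objective, not the feasible region. Finally, I would keep the pairing-by-order convention consistent throughout, so that $\Pr_k$ always denotes the posterior of the $k$-th binding site in increasing order, matching the forward/backward-sum construction of Algorithm~\ref{alg:marg}.
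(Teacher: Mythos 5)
Your proposal is correct and follows essentially the same route as the paper: substitute the paired loss $H_A$ into the local-centroid definition, use its additive structure to reduce the joint conditional expectation to sums against the marginals $\Pr_k$, and then pass from loss to gain via the affine relation $G = 1 - H_1/H^*$ to recognize the convolutions. Your explicit identity $\sum_{k} G(\tilde{Y}_k,\cdot) * \Pr_k(\cdot \given c(Y)=c,R,\Theta) = c - \Exp[H_A(\tilde{Y},Y)]/H^*$ is just a cleaner statement of the argmin-to-argmax step that the paper performs implicitly when it switches from summing $H_1$ over the full range to summing $G$ over the overlap window.
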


\begin{proof}
In the same spirit of Theorem~\ref{thm:cent1}, we use the conditional
estimator in Equation~\ref{eq:lcent} with the paired loss $H_A$:
\[
\begin{split}
\hat{Y}_c & =
\argmin_{\tilde{Y}:c(\tilde{Y})=c}
\Exp_{Y \given c(Y)=c, R, \Theta} \big[ H_A(\tilde{Y}, Y) \big] \\
& =
\argmin_{\tilde{Y}:c(\tilde{Y})=c}
\sum_{Y:c(Y)=c} \sum_{k=1}^c H_1(\tilde{Y}_k, Y_k)
\Pr(Y \given c(Y)=c, R, \Theta) \\
& =
\argmin_{\tilde{Y}:c(\tilde{Y})=c}
\sum_{k=1}^c \sum_{Y_k=(k-1)L+1}^{n-(c-k+1)L+1} H_1(\tilde{Y}_k, Y_k)
\Pr(Y_k \given c(Y)=c, R, \Theta) \\
& =
\argmax_{\tilde{Y}:c(\tilde{Y})=c}
\sum_{k=1}^c
\sum_{Y_k=\max\{(k-1)L+1, \tilde{Y}_k-L\}}^{\min\{n-(c-k+1)L+1, \tilde{Y}_k+L\}}
G(\tilde{Y}_k, Y_k)
\Pr(Y_k \given c(Y)=c, R, \Theta) \\
& =
\argmax_{\tilde{Y}:c(\tilde{Y})=c}
\sum_{k=1}^c
G(\tilde{Y}_k, \cdot) * \Pr_k(\cdot \given c(Y)=c, R, \Theta),
\end{split}
\]
and the result follows.
\end{proof}

We can spot in Lemma~\ref{lem:lcent} the familiar convolutions, but now with
the marginal posteriors $\Pr(Y_k \given c(Y), R, \Theta)$ and in a more
restricted range. We have a nice characterization, but we still have to
optimize a sum to obtain the local centroids; to this end we explore the same
recursive structure that allowed us to compute forward and backward sums. 
Let us define 
$f(\tilde{Y}_k) \doteq G(\tilde{Y}_k, \cdot) * \Pr_k(\cdot \given c(Y)=c, R,
\Theta)$ as the convolution against the marginal posterior on $Y_k$; then we
should have
\begin{equation}
\label{eq:lcent1}
\max_{\tilde{Y}:c(\tilde{Y})=c} \sum_{k=1}^c f(\tilde{Y}_k)
= \max_{\tilde{Y}_c = (c-1)L+1, \ldots, n-cL+1}
\Bigg[ f(\tilde{Y}_c)
+ \max_{\tilde{Y}_1, \ldots, \tilde{Y}_{c-1}} \sum_{k=1}^{c-1} f(\tilde{Y}_k)
\Bigg].
\end{equation}
This important observation allows us to obtain $\hat{Y}_c$ using the dynamic
programming approach listed in Algorithm~\ref{alg:localcentroid}, as
Theorem~\ref{thm:lcent} formalizes.

\begin{algorithm}[htbp]
\caption{Find $\hat{Y}_c$ using dynamic programming.}
\label{alg:localcentroid}
\raggedright \emph{Construct partial maxima and backtrack pointers:}

\begin{enumerate}[Step 1.]
\item Set $m_1(\tilde{Y}_1) = f(\tilde{Y}_1)$ for
$\tilde{Y}_1 = 1, \ldots, n-cL+1$.

\item For $k = 2, \ldots, c$ and
$\tilde{Y}_k = (k-1)L + 1, \ldots, n-(c-k+1)L + 1$ do:
set backtrack pointers
\[
A_{k-1}(\tilde{Y}_k) =
\argmax_{\tilde{Y}_{k-1} = (k-2)L+1, \ldots, \tilde{Y}_k-L}
m_{k-1}(\tilde{Y}_{k-1}).
\]
and set partial sum maximum $m_k$ as
\[
m_k(\tilde{Y}_k) = 
f(\tilde{Y}_k) + m_{k-1} \Big(A_{k-1}(\tilde{Y}_k)\Big).
\]
\end{enumerate}

\raggedright \emph{Reconstruct centroid $\hat{Y}_c$ using backtrack pointers:}

\begin{enumerate}[Step 1.]
\setcounter{enumi}{2}
\item Set last binding site position:
\[
\hat{Y}_{c,c} = \argmax_{\tilde{Y}_c = (c-1)L+1, \ldots, n-L+1}
m_c(\tilde{Y}_c).
\]
Note that, by construction,
$\max_{\tilde{Y}:c(\tilde{Y})=c} \sum_{k=1}^c f(\tilde{Y}_k) =
m_c(\hat{Y}_{c,c})$.

\item For $k = c, \ldots, 2$ do: recover the remainder of $\hat{Y}_c$ by
setting $\hat{Y}_{c,k-1} = A_{k-1}(\hat{Y}_{c,k})$.
\end{enumerate}
\end{algorithm}

\begin{theorem}
\label{thm:lcent}
Algorithm~\ref{alg:localcentroid} correctly identifies the paired local
centroids
\[
\hat{Y}_c =
\argmin_{\tilde{Y}:c(\tilde{Y}) = c}
\Exp_{Y \given c(Y)=c, R, \Theta} \big[ H_A(\tilde{Y}, Y) \big].
\]
\end{theorem}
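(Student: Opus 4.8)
The plan is to reduce the claim to a statement purely about the dynamic program and then verify that program by the standard optimal-substructure argument. By Lemma~\ref{lem:lcent}, minimizing $\Exp_{Y \given c(Y)=c, R, \Theta}\big[H_A(\tilde{Y}, Y)\big]$ over configurations with $c$ binding sites is equivalent to maximizing the separable objective $\sum_{k=1}^c f(\tilde{Y}_k)$, where $f(\tilde{Y}_k) \doteq G(\tilde{Y}_k, \cdot) * \Pr_k(\cdot \given c(Y)=c, R, \Theta)$. So it suffices to show that Algorithm~\ref{alg:localcentroid} returns an argmax of this objective over the feasible set. First I would make the feasible set explicit: a configuration with $c(\tilde{Y})=c$ is an increasing tuple $\tilde{Y}_1 < \cdots < \tilde{Y}_c$ whose sites do not overlap, i.e. $\tilde{Y}_k \ge \tilde{Y}_{k-1} + L$ for $k = 2, \ldots, c$, with each $\tilde{Y}_k$ ranging over $(k-1)L+1, \ldots, n-(c-k+1)L+1$. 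The two features that make the problem tractable are that the objective is additive across coordinates and that the constraints couple only consecutive coordinates; together these yield the optimal-substructure property that underlies Equation~\ref{eq:lcent1}.

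Next I would state the loop invariant the algorithm maintains and prove it by induction on $k$. Define $M_k(y) \doteq \max \sum_{j=1}^k f(\tilde{Y}_j)$, the maximum taken over all feasible partial configurations $(\tilde{Y}_1, \ldots, \tilde{Y}_k)$ with $\tilde{Y}_k = y$. The claim is that the quantity $m_k(y)$ built in Step~2 equals $M_k(y)$ for every admissible $y$. The base case $k=1$ is immediate, since any single site is feasible and $m_1(y) = f(y) = M_1(y)$. For the inductive step, fix an admissible $y$ for index $k$. Because $f(y)$ does not depend on $\tilde{Y}_1, \ldots, \tilde{Y}_{k-1}$, and because the only constraint linking site $k$ to the earlier sites is $\tilde{Y}_{k-1} \le y - L$, the maximization over the first $k$ coordinates factors as $M_k(y) = f(y) + \max_{(k-2)L+1 \le y' \le y - L} M_{k-1}(y')$. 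By the induction hypothesis $M_{k-1}(y') = m_{k-1}(y')$, so the inner maximum is attained at $y' = A_{k-1}(y)$ and $M_k(y) = f(y) + m_{k-1}(A_{k-1}(y)) = m_k(y)$, matching the algorithm's update exactly.

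Finally I would close the argument at the top level. Since $m_c(y) = M_c(y)$ for all admissible $y$, Step~3 computes $\max_y m_c(y) = \max_y M_c(y) = \max_{\tilde{Y}:c(\tilde{Y})=c} \sum_{k=1}^c f(\tilde{Y}_k)$, the optimal value; call the maximizing endpoint $\hat{Y}_{c,c}$. Unwinding the backtrack pointers in Step~4 via $\hat{Y}_{c,k-1} = A_{k-1}(\hat{Y}_{c,k})$ recovers at each stage the predecessor realizing the partial maximum, so by construction the recovered tuple is feasible and attains $\sum_{k=1}^c f(\hat{Y}_{c,k}) = m_c(\hat{Y}_{c,c})$; hence it is an argmax, which by Lemma~\ref{lem:lcent} is the paired local centroid $\hat{Y}_c$. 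The main obstacle I anticipate is not the induction itself but the careful verification of the factorization step: one must check that feasibility of the full tuple is equivalent to feasibility of the two glued pieces joined only at the inequality $\tilde{Y}_{k-1} \le y - L$, i.e. that no constraint between non-adjacent sites is silently violated. This holds because the non-overlap constraints are transitive along the chain, so enforcing them pairwise between consecutive sites enforces them globally; making this explicit is the one place where the staircase structure of the feasible set genuinely has to be used.
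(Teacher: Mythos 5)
Your proposal is correct and follows essentially the same route as the paper's proof: reduce via Lemma~\ref{lem:lcent} to maximizing the separable objective $\sum_{k=1}^c f(\tilde{Y}_k)$, then establish correctness of the dynamic program built on the recursion in Equation~\ref{eq:lcent1}. Your version is in fact more careful than the paper's—you make the loop invariant $m_k(y)=M_k(y)$ explicit, prove it by induction, and verify that the consecutive non-overlap constraints suffice by transitivity—whereas the paper asserts these steps as ``a straightforward application'' of the recursion.
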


\begin{proof}
From Lemma~\ref{lem:lcent} we know that $\hat{Y}_c$ is the argument of
$\max_{\tilde{Y}:c(\tilde{Y})=c} \sum_{k=1}^c f(\tilde{Y}_k)$. The key
device in Algorithm~\ref{alg:localcentroid} is to exploit the recursion in
Equation~\ref{eq:lcent1} to define $m_1(\tilde{Y}_1) = f(\tilde{Y}_1)$ and
\begin{equation}
\label{eq:lcent2}
m_k(\tilde{Y}_k) = 
f(\tilde{Y}_k) + \max_{\tilde{Y}_{k-1} = (k-2)L+1, \ldots, \tilde{Y}_k-L}
m_{k-1}(\tilde{Y}_{k-1}),
\end{equation}
for $k > 1$, to store partial sum maxima. Now it follows that
\[
\max_{\tilde{Y}:c(\tilde{Y})=c} \sum_{k=1}^c f(\tilde{Y}_k)
= \max_{\tilde{Y}_c = (c-1)L+1, \ldots, n-cL+1} m_c(\tilde{Y}_c),
\]
and so Step~3 must be correct. The correctness of Step~4 relies on the right
specification of $m$ in Steps~1 and~2; but these steps are a straightforward
application of Equation~\ref{eq:lcent1} using the definition of $m_1$ and a
formulation of Equation~\ref{eq:lcent2} based on the backtrack pointers $A$,
and so the algorithm is correct.
\end{proof}

We note that the paired local centroids minimize an expected posterior upper
bound $H_A$ on the loss $H$, and so the actual local centroid might not be
attained. We expect, however, that for common cases in which the motif
coverage $c(Y) L$ is much smaller than $n$ that the bound is tight since $H_A$
approximates $H$ well and thus the two local centroids often coincide.

\subsubsection{Global centroid}
While the local centroids already convey information about the distribution of
posterior mass in the space of binding site configurations, the end goal of
the analysis is a point estimate that is, in itself, a good representative of
the space. Following the strategy we outlined in the beginning of this
section, we can further summarize the information in the local centroids by
identifying a configuration $\hat{Y}$ that minimizes the expected conditional
Hamming loss, as in Equation~\ref{eq:gcent}. This approach, however, entails
the same difficulties as defining the centroid based on all points in the
space, and it is thus not treatable by a systematic approach---we are now just
restricting the configurations to the local centroids.

The global centroid can be defined by direct enumeration of all possible
configurations while keeping the minimizer of the expected conditional
posterior loss, but this ``brute-force'' approach considers an exponential
number of solutions. A simple heuristic is to restrict the global centroid to
be one of the local centroids,
\begin{equation}
\label{eq:globalcentroid}
\hat{Y} = \argmin_{\tilde{Y} \in \{\hat{Y}_c\}_{c=0}^C}
\Exp_{c(Y) \given R, \Theta} \big[ H(\hat{Y}_{c(Y)}, \tilde{Y}) \big].
\end{equation}
Another alternative is to just take as global centroid the local centroid of
the modal number of binding sites, $\hat{Y}=\hat{Y}_{c^*}$, where $c^* \doteq
\argmax_{c=0, \ldots, C} \Pr(c(Y)=c\given R, \Theta)$. From now on we adopt
the global centroid in Equation~\ref{eq:globalcentroid} for simplicity and,
again, computational expediency.

Before we continue to our next example, let us remark that a
\emph{constrained}, on the number of binding sites, global centroid might be
more computationally feasible since we are restricting the space of available
configurations. For instance, consider the 1-global centroid,
\[
\hat{Y}_o \doteq
\argmin_{\tilde{Y}:c(\tilde{Y}) = 1}
\Exp_{Y \given R, \Theta} \Big[ H(\tilde{Y}, Y) \Big].
\]
As when defining local centroids, we can approximate $\hat{Y}_o$ using a
paired loss, and since
\[
\begin{split}
\Exp_{Y \given R, \Theta} \Big[ H_A(\tilde{Y}, Y) \Big] & =
\sum_{c=0}^C \sum_{Y:c(Y)=c} \sum_{k=1}^c
H_1(\tilde{Y}, Y_k) \Pr(Y \given R, \Theta) \\
& = 
\sum_{i=1}^n \sum_{c=0}^C \sum_{Y:c(Y)=c} \sum_{k=1}^c
H_1(\tilde{Y}, i) \Pr(Y_k = i \given R, \Theta) \\
& = 
\sum_{i=1}^n H_1(\tilde{Y}, i) 
\sum_{c=0}^C \sum_{Y:c(Y)=c} \sum_{k=1}^c
\Pr(Y_k = i \given R, \Theta) \\
& =
\sum_{i=1}^n H_1(\tilde{Y}, i) P_c(i \given R, \Theta),
\end{split}
\]
where
\begin{equation}
\label{eq:pc}
P_c(i \given R, \Theta) \doteq \sum_{c=1}^C \sum_{Y : c(Y)=c}
\sum_{k=1}^c \Pr(Y_k = i \given R, \Theta),
\end{equation}
we have that
\[
\hat{Y}_o =
\argmin_{\tilde{Y}:c(\tilde{Y})=1}
\Exp_{Y \given R, \Theta} \Big[ H_A(\tilde{Y}, Y) \Big]
= \argmax_{\tilde{Y}:c(\tilde{Y})=1}
G(\tilde{Y}, \cdot) * P_c(\cdot \given R, \Theta).
\]
It is important to note that while the restriction of one binding site might
seem artificial, the derivation of $\hat{Y}_o$ is helpful in recognizing
sequence regions that are likely to host binding sites. In fact, since $P_c$
captures the posterior probability of having a binding site starting at each
position, and considering the overlap gain $G$, the convolution of $G$ and
$P_c$ highlights positions that have higher posterior probability of being
covered by a binding site.

% FIXME: H_1 when defining H_A

\begin{example}
\label{ex:mbs}
We revisit the same sequence from Example~\ref{ex:sbs}, but now allow for at
most $C = \lfloor n / L \rfloor = 33$ binding sites, and adopt the prior given
in Equation~\ref{eq:priorp} with $b = 3$ and thus $p = 1 - b/n = 0.985$.
Using Algorithm~\ref{alg:marg} we are able to compute the conditional marginal
posteriors $\Pr(c(Y) \given R, \Theta)$ and $\Pr(Y_k \given c(Y), R,
\Theta)$ for $k = 1, \ldots, c(Y)$. These posterior distributions yield the
local centroids---according to Algorithm~\ref{alg:localcentroid}---and the
global centroid from Equation~\ref{eq:globalcentroid}. In Table~\ref{tab:ex2c}
we list the marginal posterior $\Pr(c(Y) = c \given R, \Theta)$ up to the
smallest $c$ such that $\Pr(c(Y) \le c \given R, \Theta) > 0.95$, along with
the local centroids; the global centroid $\hat{Y}_C$ is highlighted.
Interestingly, the global centroid coincides with the local centroid from the
modal number of binding sites.
\begin{table}[bht]
\caption{Centroids and marginal posterior distribution of number of binding
sites. The global centroid and the modal number of binding sites are
highlighted in bold.}
\label{tab:ex2c}
\begin{tabular*}{\textwidth}
{c@{\extracolsep{\fill}}l@{\extracolsep{\fill}}c@{\extracolsep{\fill}}c}
\hline
$c$ & $\hat{Y}_c$ & $\Pr(c(Y)=c \given R, \Theta)$ &
$\Pr(c(Y) \le c \given R, \Theta)$ \\ \hline
$0$ & -- & $0.014$ & $0.014$ \\
$1$ & $36$ & $0.075$ & $0.089$ \\
$2$ & $36,147$ & $0.181$ & $0.270$ \\
$\mathbf{3}$ & $\mathbf{13,36,147}$ & $\mathbf{0.254}$ & $0.524$ \\
$4$ & $13,36,63,147$ & $0.233$ & $0.757$ \\
$5$ & $13,36,63,147,167$ & $0.147$ & $0.904$ \\
$6$ & $3,29,36,63,147,167$ & $0.067$ & $0.971$ \\
\hline
\end{tabular*}
\end{table}

In Figure~\ref{fig:ex2pc} we display the posterior probabilities of binding
site coverage $P_c$ from Equation~\ref{eq:pc}, along with the convolutions
that are needed to define the 1-global centroid $\hat{Y}_o = 36$. As can be
seen, position $36$ has a lot of support, being present in all the local
centroids listed in Table~\ref{tab:ex2c}; in fact, the probability of a
binding site starting at position $36$ is greater than $50\%$.
\begin{figure}[hbtp]
\includegraphics[width=\fracfloat\textwidth]{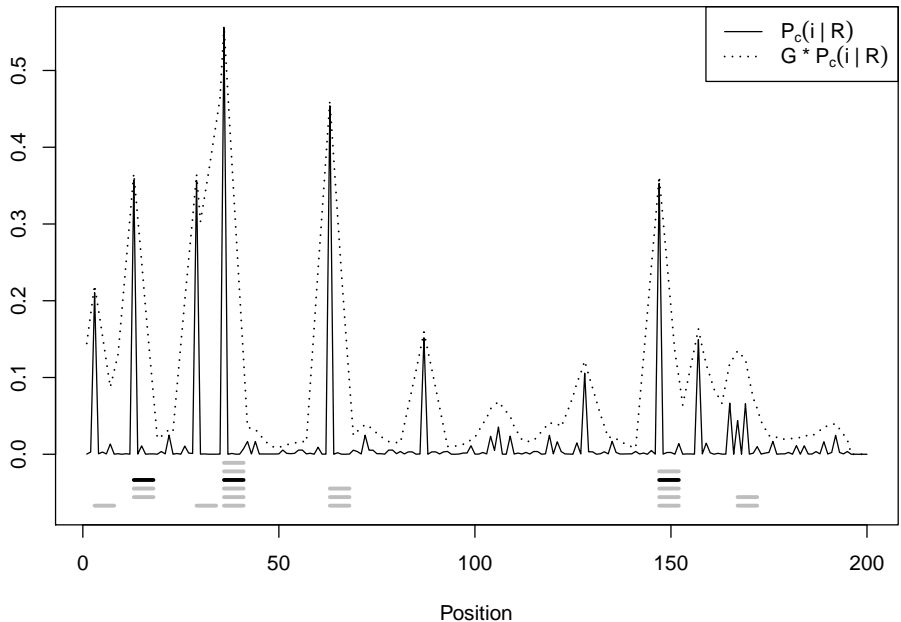}
\caption{Posterior binding site coverage $P_c$ in solid line and convolution
$G * P_c$ in dotted line. Local centroids are listed below in gray; the global
centroid is in black.}
\label{fig:ex2pc}
\end{figure}

While $P_c$ can provide us guidance for which positions are likely to start a
binding site, using $P_c$ to define local centroids can be misleading. For
instance, we could expect that the local centroid with three binding
sites---the modal number of binding sites---would be, following a decreasing
order on $P_c$, $36$, $63$, and $147$. However, if we examine the marginal
posteriors $\Pr(Y_k \given c(Y)=3, R, \Theta)$ in Figure~\ref{fig:ex2pk} we
realize that position $13$ is favored over position $63$ because,
if $F_k \doteq G * \Pr_k(\cdot \given c(Y)=3, R, \Theta)$,
$F_1(13) + F_2(36) > F_1(36) + F_2(63)$.
\begin{figure}[hbtp]
\includegraphics[width=\fracfloat\textwidth]{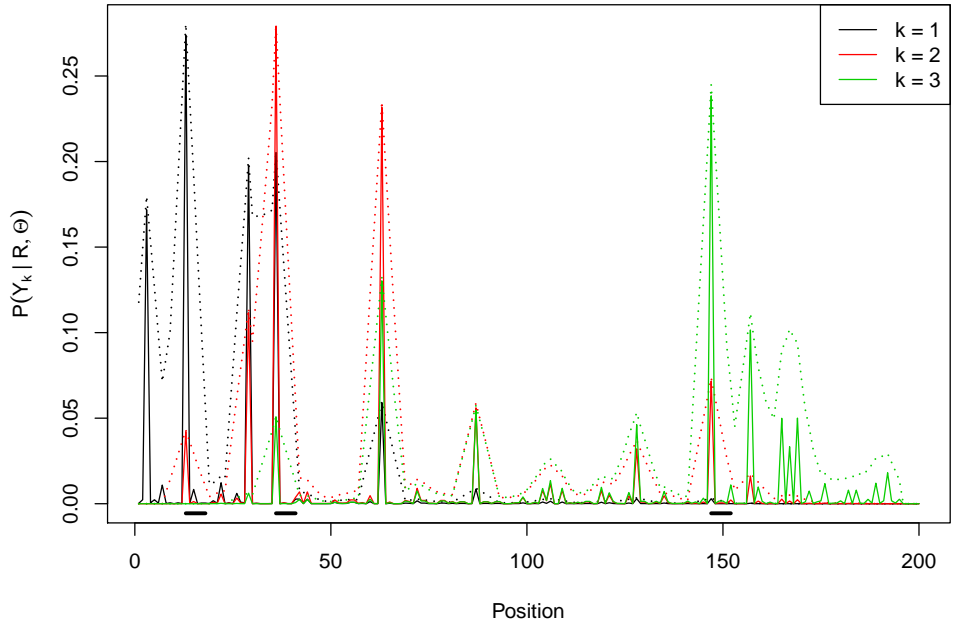}
\caption{Marginal posterior distributions
$\Pr(Y_k \given c(Y) = 3, R, \Theta)$ in solid line and convolutions $G *
\Pr(\cdot \given c(Y), R, \Theta)$ in dotted line. The local centroid is
displayed at the bottom.}
\label{fig:ex2pk}
\end{figure}

\end{example}

\section{Multiple sequences, multiple binding sites per sequence, random
motif}
We are ready to address our model in broader generality: the dataset now
comprises $m$ sequences, $R = \{R_i\}_{i=1}^m$, and thus binding site
configurations are also indexed by sequence, $Y = \{Y_i\}_{i=1}^m$. As before,
we have that $Y$ is independent of motif parameters $\Theta$, but we further
assume that sequences and configurations are conditionally independent given
$\Theta$:
\begin{equation}
\label{eq:rylhood}
\Pr(R, Y \given \Theta) = \prod_{i=1}^m \Pr(R_i, Y_i \given \Theta)
= \prod_{i=1}^m \Pr(R_i \given Y_i,\Theta) \Pr(Y_i).
\end{equation}

Given $\Theta$ we would be able to apply the methods discussed this far to
each sequence separately: compute forward and backward sums to obtain marginal
posterior probabilities for each $Y_i$ and then find local centroids and
the $i$-th global centroid. We will, however, assume that $\Theta$ is random,
\begin{equation}
\label{eq:thetaprior}
\theta_j \sim \text{Dir}(\alpha_j), \quad j = 0, 1, \ldots, L,
\end{equation}
independently, and we thus wish to also conduct inference on the background
and motif compositions. This assumption, albeit more realistic, complicates
matters, since the marginal unconditioned posterior distributions of $Y$ and
$\Theta$ are not readily available; we are now required to estimate them
before obtaining centroid estimates. To this end, we present next a Gibbs
sampler \citep{geman84,liu08} that draws $Y_i$ for each sequence given
$\Theta$ and then samples $\Theta$ conditional on the binding site
configurations $Y$, similar to the approach in~\citep{liu95}.

\subsection{Sampling $\Theta$ given $Y$ and $R$}
Since the prior on $\Theta$ is conjugate, we should be able to sample $\Theta$
exactly from a Dirichlet distribution. From Equations~\ref{eq:rylhood}
and~\ref{eq:thetaprior} we have
\[
\begin{split}
\Pr(\theta_0 \given Y, R) & \propto
\Bigg[ \prod_{i=1}^m \prod_{s \in \mathcal{S}} \prod_{j \in BG_i}
\theta_{0,s}^{I(R_{ij}=s)} \Bigg]
\Bigg[ \prod_{s \in \mathcal{S}} \theta_{0,s}^{\alpha_{0,s}-1} \Bigg] \\
& =
\prod_{s \in \mathcal{S}}
\theta_{0,s}^{\sum_{i=1}^m \sum_{j \in BG_i} I(R_{ij}=s) + \alpha_{0,s}-1},
\end{split}
\]
and so $\theta_0 \given Y, R \sim \text{Dir}(N_0(Y, R) + \alpha_0)$,
where $N_0(Y, R) = \{N_{0,s}\}_{s \in \mathcal{S}}$ and
\[
N_{0,s} = \sum_{i=1}^m \sum_{j \in BG_i} I(R_{ij}=s)
\]
is the number of background positions across all sequences that have symbol
$s$. Similarly, for the $j$-th position in the motif,
\[
\begin{split}
\Pr(\theta_j \given Y, R) & \propto
\Bigg[ \prod_{i=1}^m \prod_{s \in \mathcal{S}} \prod_{k=1}^{|Y_i|}
\theta_{j,s}^{I(R_{i,Y_{ik}+j-1}=s)} \Bigg]
\Bigg[ \prod_{s \in \mathcal{S}} \theta_{j,s}^{\alpha_{j,s}-1} \Bigg] \\
& =
\prod_{s \in \mathcal{S}}
\theta_{0,s}^{\sum_{i=1}^m \sum_{j \in BG_i} I(R_{ij}=s) + \alpha_{0,s}-1},
\end{split}
\]
and thus $\theta_j \given Y, R \sim \text{Dir}(N_j(Y, R) + \alpha_j)$, with
$N_j(Y, R) = \{N_{j,s}\}_{s \in \mathcal{S}}$ and
\[
N_{j,s} = \sum_{i=1}^m \sum_{k=1}^{|Y_i|} I(R_{i,Y_{ik} + j - 1} = s)
\]
is the number of motif $j$-th positions across all sequences and binding sites
that have symbol $s$.

\subsection{Sampling $Y_i$ given $\Theta$ and $R$} 
Each configuration $Y_i$ for the $i$-th sequence is conditionally independent
given $\Theta$, so we can devise a sampling procedure that can be applied to
each sequence in turn. To simplify the notation, let us drop the sequence
index in what follows, that is, $Y_i$ is $Y$, $R_i$ is $R$, and so on. We will
be following a similar approach to Sections~\ref{ssec:postcy}
and~\ref{ssec:postykcy}, but instead of summing to obtain marginal
distributions we will be sampling \emph{exactly}.

To sample from the conditional posterior on $Y$, we first sample $c(Y)=c$
according to Equation~\ref{eq:cY} and then proceed to sample $Y$ from its last,
$c$-th binding site up to its first binding site. For this reason, this
strategy is commonly referred to as ``stochastic backtracking'', since it can
be regarded as a stochastic version of Step~4 in Algorithms~\ref{alg:marg}
and~\ref{alg:localcentroid}. Sampling $Y$ is similar to the predictive update
step in \citep{liu95}, which, on its turn, is based on a stochastic variation
of expectation-maximization where missing data is imputed \citep{tanner87};
however, here we exploit a hierarchical structure on $c(Y)$ and do not use the
collapsing technique of \citet{liu94}.

Exploiting the conditional independence of the sequence configurations and
Equation~\ref{eq:Yprior1} the last binding site can be sampled using
\begin{equation}
\label{eq:sampleYlast}
\begin{split}
\Pr(Y_c \given c(Y), R, \Theta) & =
\frac{\sum_{Y_1, \ldots, Y_{c-1}} \Pr(R \given Y, \Theta)}
{\sum_{\tilde{Y}_c} \sum_{\tilde{Y}_1, \ldots, \tilde{Y}_{c-1}}
\Pr(R \given \tilde{Y}, \Theta)} \\
& =
\frac{F_{c-1,Y_c-1} \lambda(Y_c; \Theta)}
{\sum_{\tilde{Y}_c = (c-1)L + 1}^{n-L+1}
F_{c-1,\tilde{Y}_c-1} \lambda(\tilde{Y}_c; \Theta)}.
\end{split}
\end{equation}

To sample the (intermediate) $j$-th binding site we use a similar expression:
\begin{equation}
\label{eq:sampleY}
\begin{split}
\Pr(Y_j \given Y_{j+1}, \ldots, Y_c, c(Y), R, \Theta) & =
\frac{\Pr(Y_j, \ldots, Y_c, c(Y), R, \Theta)}
{\sum_{\tilde{Y}_j} \Pr(\tilde{Y}_j, \ldots, \tilde{Y}_c, c(Y), R, \Theta)} \\
& =
\frac{\sum_{Y_1, \ldots, Y_{j-1}} \Pr(R \given Y, \Theta)}
{\sum_{\tilde{Y}_j} \sum_{\tilde{Y}_1, \ldots, \tilde{Y}_{j-1}}
\Pr(R \given \tilde{Y}, \Theta)} \\
& =
\frac{F_{j-1, Y_j-1} \lambda(Y_j; \Theta)}
{\sum_{\tilde{Y}_j=(j-1)L+1}^{Y_{j+1}-L}
F_{j-1, \tilde{Y}_j-1} \lambda(\tilde{Y}_j; \Theta)}.
\end{split}
\end{equation}
By making the convention that $Y_{c+1}=|R|+1$ we can reduce
Equation~\ref{eq:sampleYlast} to Equation~\ref{eq:sampleY}. Moreover, note
that Equation~\ref{eq:sampleY} implies that
\[
\Pr(Y_j \given Y_{j+1}, \ldots, Y_c, c(Y), R, \Theta) =
\Pr(Y_j \given Y_{j+1}, c(Y), R, \Theta),
\]
as expected.

We summarize the whole procedure in Algorithm~\ref{alg:gibbs}. Note how
Steps~1.1 to~1.3 are analogous to Steps~1 to~3 in Algorithm~\ref{alg:marg},
and how Step~1.4 is an stochastic version of Step~4 in
Algorithm~\ref{alg:marg}: as previously stated, we are now sampling backwards
instead of summing backwards. To obtain the centroids we follow the procedure
described in Section~\ref{ssec:postinf}, but adopting Monte Carlo
estimates of the marginal posterior distributions, for $i = 1, \ldots, m$,
\[
\begin{split}
\widehat{\Pr}(c(Y_i) = c \given R) & \approx
\frac{1}{T} \sum_{t=1}^T I \big( c(Y_i^{(t)}) = c \big), \\
\widehat{\Pr}(Y_{ik} = j \given c(Y_i) = c, R) & \approx
\frac{\sum_{t=1}^T I \big( Y_{ik}^{(t)} = j \big)
I\big( c(Y_i^{(t)}) = c \big)}
{\sum_{t=1}^T I\big( c(Y_i^{(t)}) = c \big)},
\quad k = 1, \ldots, c, \\
\end{split}
\]
where $T$ is the number of samples.

\begin{algorithm}[htpb]
\caption{Gibbs sampler for $\Pr(Y, \Theta \given R)$.}
\label{alg:gibbs}
\raggedright \emph{Set $\Theta^{(0)}$ arbitrarily. For $t = 1, \ldots$ (until
convergence) do:}
\begin{enumerate}[Step 1.]
\item \emph{(Sample $Y \given \Theta, R$)} For each sequence $i = 1, \ldots,
m$, do: let $n = |R_i|$, $C = \lfloor n/L \rfloor$ and sample $Y_i \given
R_i, \Theta$.

\begin{enumerate}[Step {1}.1.]
  \item \emph{(Initialize)} Set $F_{0,j} = 1$ for $j = 0, 1, \ldots, n$ and
  for $c = 1, \ldots, C$ set $F_{c,j} = 0$ when $j < cL$.

  \item \emph{(Compute forward sums)} For $c = 1, \ldots, C$ and $j = cL+1,
  \ldots, n$ do: set $F_{c,j}$ as in Equation~\ref{eq:fcjrec},
  \[
  F_{c,j} = F_{c,j-1} + F_{c-1,j-L} \lambda_i(j-L+1; \Theta^{(t-1)}),
  \]
  where $\lambda_i$ uses $R_i$.

  \item \emph{(Sample $c(Y_i^{(t)}) \given R_i, \Theta^{(t-1)}$)} For $c = 0,
  \ldots, C$ do: compute marginal posterior $c(Y_i)$ as in
  Equation~\ref{eq:cY} when applied to the $i$-th sequence,
  \[
  \Pr(c(Y_i) = c \given R_i, \Theta^{(t-1)}) =
  \frac{F_{c,n} \binom{n-c(L-1)}{c}^{-1} \Pr(c(Y_i) = c)}
  {\sum_{\tilde{c}=0}^C F_{\tilde{c},n}
  \binom{n-\tilde{c}(L-1)}{\tilde{c}}^{-1} \Pr(c(Y_i) = \tilde{c})}
  \]
  and sample $c^{(t)} \doteq c(Y_i^{(t)})$ according to
  $\Pr(c(Y_i) = c \given R_i, \Theta^{(t-1)})$.

  \item \emph{(Sample $Y_i^{(t)} \given c(Y_i^{(t)})=c^{(t)}, R_i,
  \Theta^{(t-1)}$)} For $k = c^{(t)}, \ldots, 1$ do: sample $Y_{ik}^{(t)}$
  proportional to
  $F_{k-1, Y_{ik}^{(t)} - 1} \lambda_i(Y_{ik}^{(t)}; \Theta^{(t-1)})$ as in
  Equation~\ref{eq:sampleY},
  \begin{multline*}
  \Pr(Y_{ik}^{(t)} \given Y_{i,k+1}^{(t)},
  c(Y_i^{(t)}) = c^{(t)}, R_i, \Theta^{(t-1)}) = \\
  = \frac{F_{k-1, Y_{ik}^{(t)}-1} \lambda_i(Y_{ik}^{(t)}; \Theta^{(t-1)})}
  {\sum_{\tilde{Y}_k=(k-1)L+1}^{Y_{i,k+1}^{(t)}-L}
  F_{k-1, \tilde{Y}_k-1} \lambda_i(\tilde{Y}_k; \Theta^{(t-1)})}
  \end{multline*}
  \end{enumerate}

\item \emph{(Sample $\Theta \given Y, R$)} For $j = 0, \ldots, L$ compute
$N_j(Y^{(t)}, R)$ and then sample
$\theta_j^{(t)} \given Y^{(t)}, R \sim \text{Dir}(N_j(Y^{(t)}, R) + \alpha_j)$.
\end{enumerate}
\end{algorithm}

\begin{example}
For the random motif version of Example~\ref{ex:mbs} we simulate $m = 20$
sequences of same length $n = 200$ using $\Theta$ from Table~\ref{tab:theta}
and the prior for $Y_i$, $i = 1, \ldots, m$, from Equation~\ref{eq:priorp}
with $p = 1 - 1/n = 0.995$.

We continue focusing on the inference of binding site configurations in the
same sequence from previous examples, which is the first sequence in the
simulated dataset. We assume a non-informative prior on $\Theta$ by setting
$\alpha_{j,s} = 1$ for $s \in \mathcal{S}$ and $j = 0, \ldots, L$; the prior
on each sequence $Y_i$ is the same prior from Example~\ref{ex:mbs} with $p =
0.985$. Algorithm~\ref{alg:gibbs} is run for $10,\!000$ iterations to
guarantee convergence (diagnostics not shown.)

The marginal posterior distribution of $\Theta$ can be assessed in
Figure~\ref{fig:ex3theta}. Since most positions in the sequences are
background sequences $\theta_0$ has very small posterior variances. Also note
that the canonical palindromic E-box motif, with consensus \texttt{CACGTG}, is
recovered.

\begin{figure}[hbt]
\includegraphics[width=\fracfloat\textwidth]{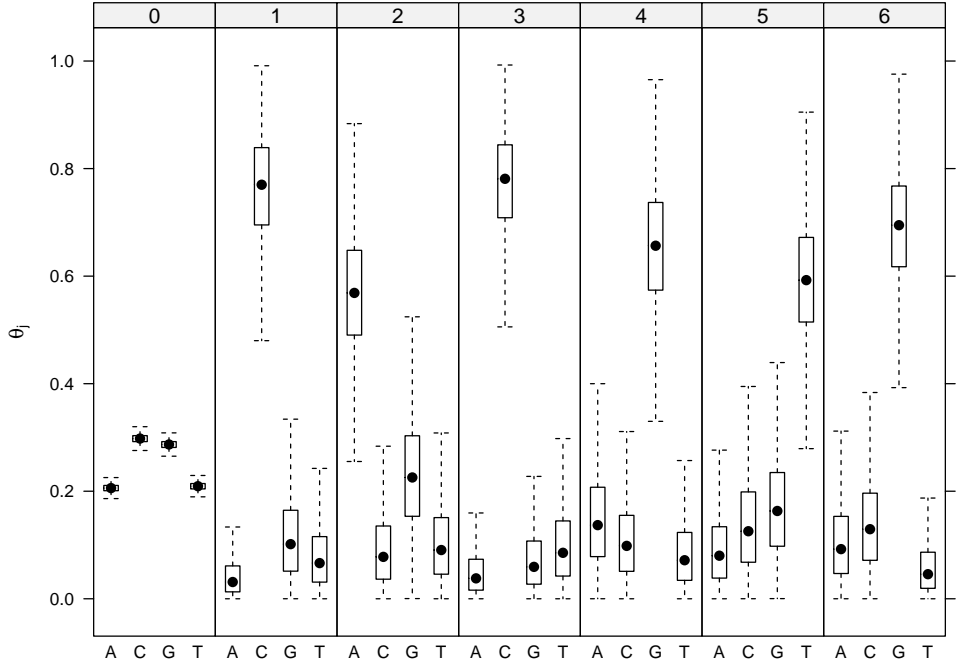}
\caption{Boxplots of MCMC samples for $\Theta$ (outliers are not shown.)}
\label{fig:ex3theta}
\end{figure}

The procedure is now similar to what we presented in Example~\ref{ex:mbs}; the
main difference is that the marginal posterior distributions are estimated
from the MCMC samples. Table~\ref{tab:ex3c} lists the estimated marginal
posterior distribution of the number of binding sites, the local and global
centroids. The global centroid does not coincide with the local centroid for
the modal number of binding sites. Moreover, the local centroids here are
different from the (conditional) local centroids in Example~\ref{ex:mbs}, most
likely due to the randomness of $\Theta$ being taken into account.

\begin{table}[bht]
\caption{Centroids and estimated marginal posterior distribution of number of
binding sites. The global centroid and the modal number of binding sites are
highlighted in bold.}
\label{tab:ex3c}
\begin{tabular*}{\textwidth}
{c@{\extracolsep{\fill}}l@{\extracolsep{\fill}}c@{\extracolsep{\fill}}c}
\hline
$c$ & $\hat{Y}_c$ & $\widehat{\Pr}(c(Y)=c \given R, \Theta)$ &
$\widehat{\Pr}(c(Y) \le c \given R, \Theta)$ \\ \hline
$0$ & -- & $0.026$ & $0.026$ \\
$1$ & $29$ & $0.107$ & $0.133$ \\
$\mathbf{2}$ & $\mathbf{29,167}$ & $0.210$ & $0.343$ \\
$3$ & $29,63,167$ & $\mathbf{0.274}$ & $0.617$ \\
$4$ & $13,36,147,167$ & $0.201$ & $0.818$ \\
$5$ & $13,29,63,147,167$ & $0.120$ & $0.938$ \\
$6$ & $13,29,36,63,147,167$ & $0.046$ & $0.984$ \\
\hline
\end{tabular*}
\end{table}

Figure~\ref{fig:ex3pc} displays the estimated $P_c$, $G * P_c$, and the
centroids. We see that compared to Example~\ref{ex:mbs} some posterior mass
has shifted to positions $29$ and to the group of positions $166$, $167$, and
$168$. Here we clearly see the advantage of a centroid estimator: $G * P_c$,
and later $G * \Pr_k(\cdot \given R)$, gathers evidence of motif binding from
nearby positions, yielding a better summary---according to our choice of loss
function---of the distribution of posterior mass.

\begin{figure}[hbt]
\includegraphics[width=\fracfloat\textwidth]{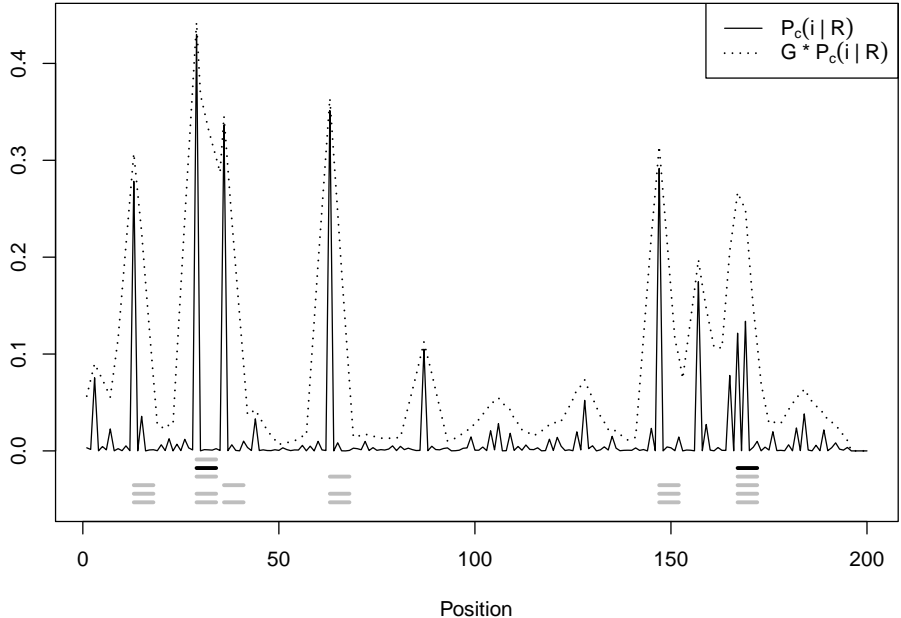}
\caption{Estimated posterior binding site coverage $P_c$ in solid line and
convolution $G * P_c$ in dotted line. Local centroids are listed below in
gray; the global centroid is in black.}
\label{fig:ex3pc}
\end{figure}

The selection of position $167$ in the second local centroid $\hat{Y}_2$ might
seem puzzling since the peaks at positions $36$, $63$, and $147$ hold higher
coverage probabilities. Checking $\widehat{\Pr}(Y_k \given R)$ in
Figure~\ref{fig:ex3pk} helps dismiss any doubts: most of the support for these
positions come from configurations with higher number of binding sites, as
evidenced by the respective local centroids, but these configurations hold
relatively low posterior mass. When $c(Y)=2$, the prior on $Y_{2,2}$ assigns
more posterior probability to higher positions, close to the end of the
sequence, simply because there are more configurations for $Y_{2,2}$ on these
positions. It is also important to notice that while none of the positions in
the cluster $166$--$168$ has higher marginal posterior mass than positions
$63$ and $147$, the convolution $G * \widehat{\Pr}_2(\cdot \given R)$ is
maximized at position $167$, that is, the cluster when taken together has more
support from the data, as weighted by $G$.

\begin{figure}[hbt]
\includegraphics[width=\fracfloat\textwidth]{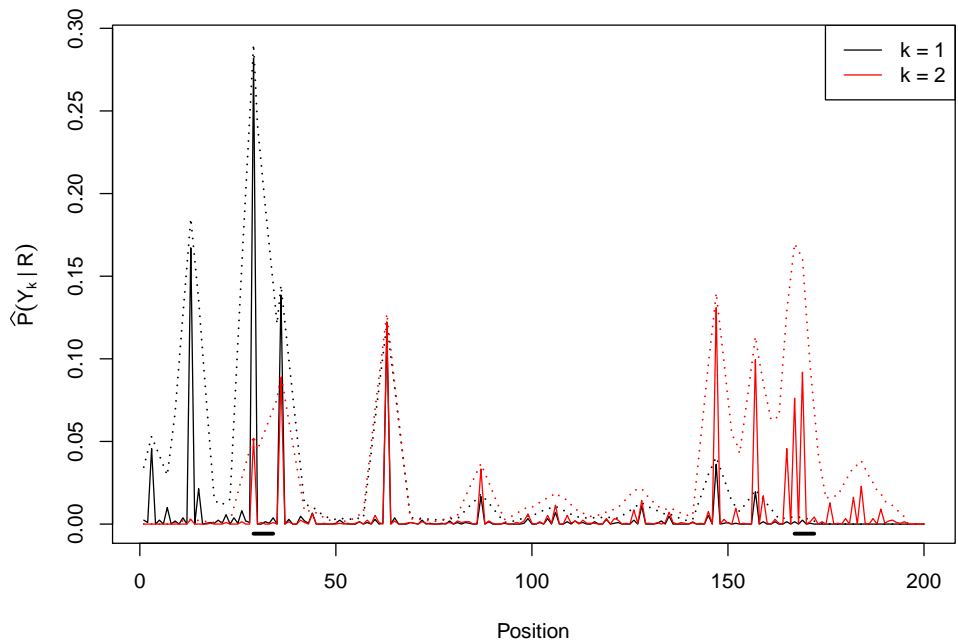}
\caption{Estimated marginal posterior distributions
$\Pr(Y_k \given c(Y) = 2, R, \Theta)$ in solid line and convolutions
$G * \Pr(\cdot \given c(Y), R, \Theta)$ in dotted line. The local centroid is
displayed at the bottom.}
\label{fig:ex3pk}
\end{figure}

\end{example}

\begin{example}
We end this section with an example from the real-world dataset in
\citep{tompa05}, sequence set \texttt{yst02r}. The dataset contains $m = 4$
sequences each with $n = 500$ letters. We set $L=16$ and adopt a
non-informative prior on $\Theta$, as in the previous example, and the prior
on each $Y_i$, for the $i$-th sequence, from Equation~\ref{eq:priorp} with $b
= 3$ per thousand positions, so $p = 1 - 3 / 1000 = 0.997$. As in the previous
example, $10,\!000$ iterations suffice to reach convergence.

Let us focus on the second sequence. Figure~\ref{fig:ex4pc} pictures the
binding site coverage probabilities, along with the local centroids. The
global centroid $\hat{Y}_C = \{85,105,169\}$ contains three binding sites, and
it is also the local centroid for the modal number of binding sites, with
$\widehat{\Pr}(c(Y)=3 \given R) = 0.32$. Since most of the posterior mass in
concentrated in configurations with $c(Y) = 3$, the posterior profiles
$\widehat{\Pr}(Y_k \given c(Y)=3, R)$ are similar to $P_c$ and are thus
omitted.

\begin{figure}[hbt]
\includegraphics[width=\fracfloat\textwidth]{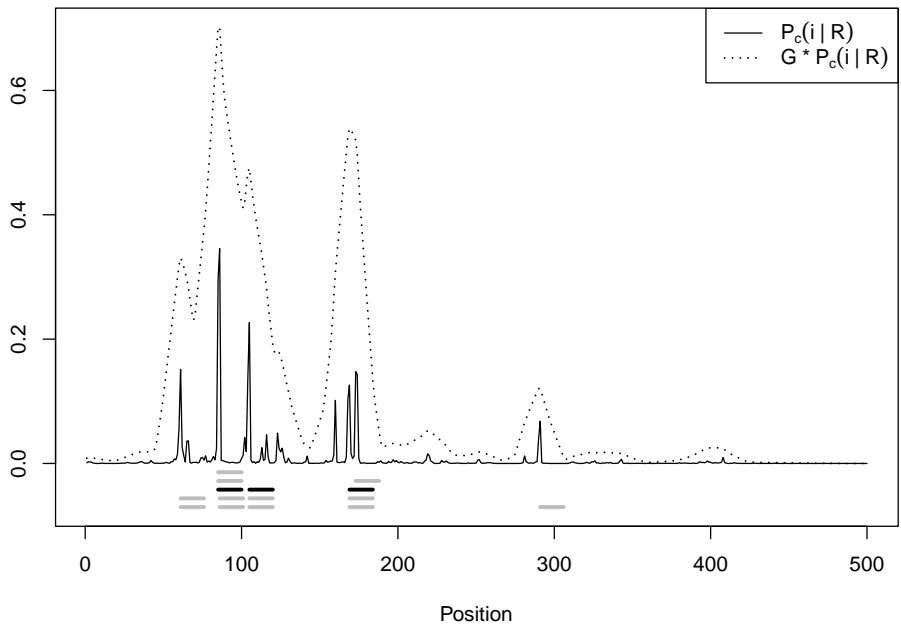}
\caption{Estimated posterior binding site coverage $P_c$ in solid line and
convolution $G * P_c$ in dotted line for real-world dataset, second sequence.
Local centroids are listed below in gray; the global centroid is in black.}
\label{fig:ex4pc}
\end{figure}

%\begin{figure}[hbt]
%\includegraphics[width=\fracfloat\textwidth]{profs2k3yst}
%\end{figure}

From the MCMC samples we can produce the MAP estimate
$\hat{Y}_M = \{86, 105, 174\}$ as the configuration with highest frequency
among the samples: $\widehat{\Pr}(\hat{Y}_M \given R) = 0.032$. In fact, we
can estimate the posterior probability of each sampled binding site
configuration and then, using classic multidimensional scaling
\citep{gower66}, visualize the estimated posterior distribution in
Figure~\ref{fig:ex4dist}. It is interesting to note that the null
configuration---that is, without binding sites---is also very likely with
posterior probability $0.024$. In contrast, the global centroid has very small
posterior probability, close to $0.001$; it sits, however, closer to
configurations with high posterior mass, including the local centroids with
one, two, and four binding sites.

\begin{figure}[hbt]
\includegraphics[width=\fracfloat\textwidth]{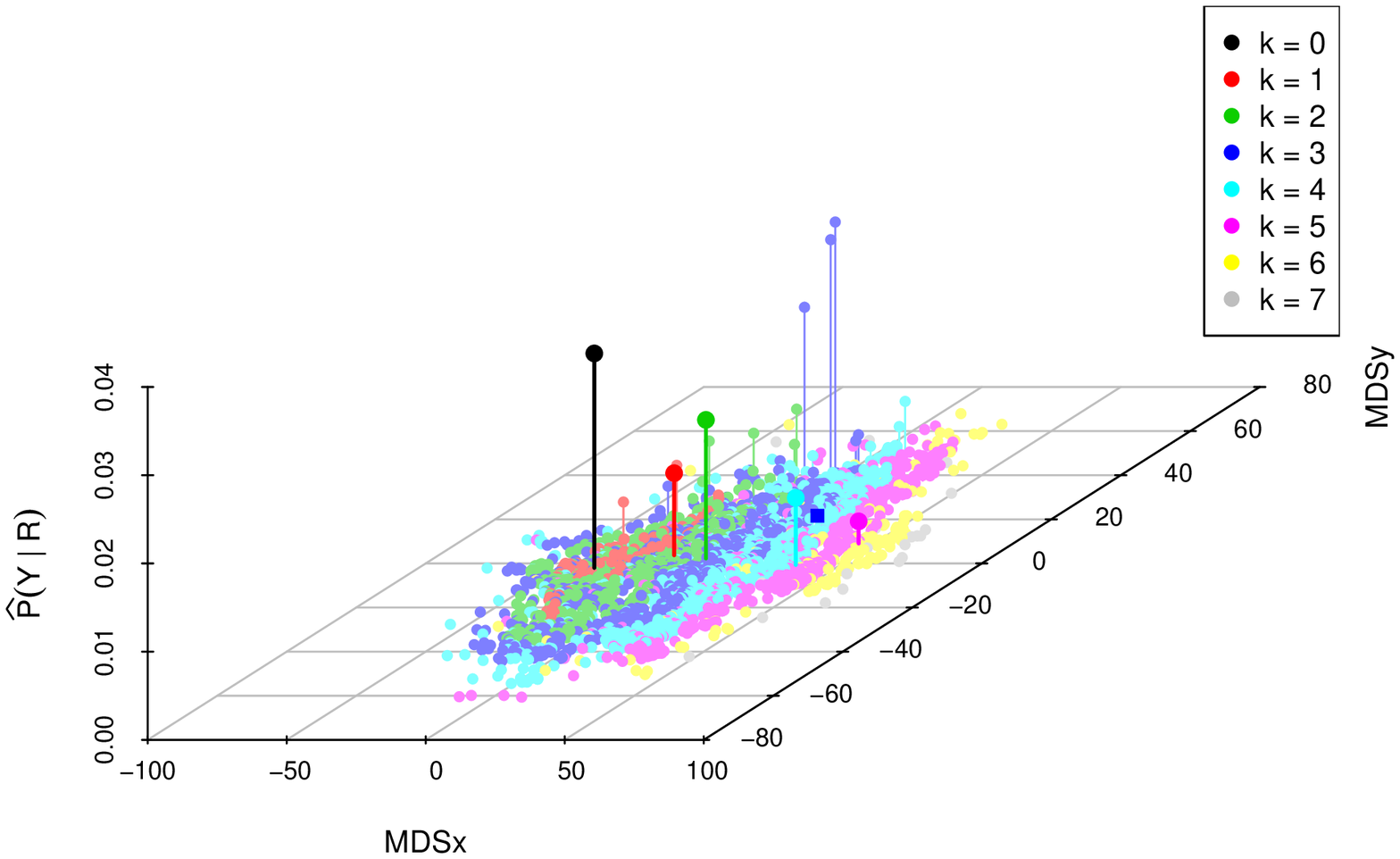}
\caption{Estimated posterior distribution of configurations $Y$ based on MCMC
samples and projected using multidimensional scaling. The colors code
configurations with different number of binding sites. Bold points mark local
centroids, while a square (bold) point highlights the global centroid.}
\label{fig:ex4dist}
\end{figure}

To better assess how the centroid estimator is closer to a mean than a mode
estimator, we plot the estimated posterior distribution of the generalized
loss function $H$ centered at both $\hat{Y}_C$ and $\hat{Y}_M$ in
Figure~\ref{fig:ex4hist}. Since $\Exp_{Y \given R}[H(\hat{Y}_M,Y)] = 42.40$
and $\Exp_{Y \given R}[H(\hat{Y}_C,Y)] = 40.22$, we see that the binding sites
in the centroid configuration are, on average, overlapping two extra positions
with the binding sites in all the configurations when compared to the MAP
estimate's binding sites. Both estimates are fairly similar, but the centroid
reminds us that placing the third binding site at position $169$, instead of
$174$, yields an unlikely configuration, but with a higher chance of
overlapping with binding sites in positions $160$--$175$ that have high
posterior probability. In the context of Figures~\ref{fig:ex4dist}
and~\ref{fig:ex4hist}, the centroid places itself between two clusters that
concentrate posterior mass: one with configurations $Y$ such that
$25 \le H(\hat{Y}_C, Y) \le 40$ and another with configurations further away,
satisfying $40 \le H(\hat{Y}_C, Y) \le 50$.

\begin{figure}[hbt]
\includegraphics[width=\fracfloat\textwidth]{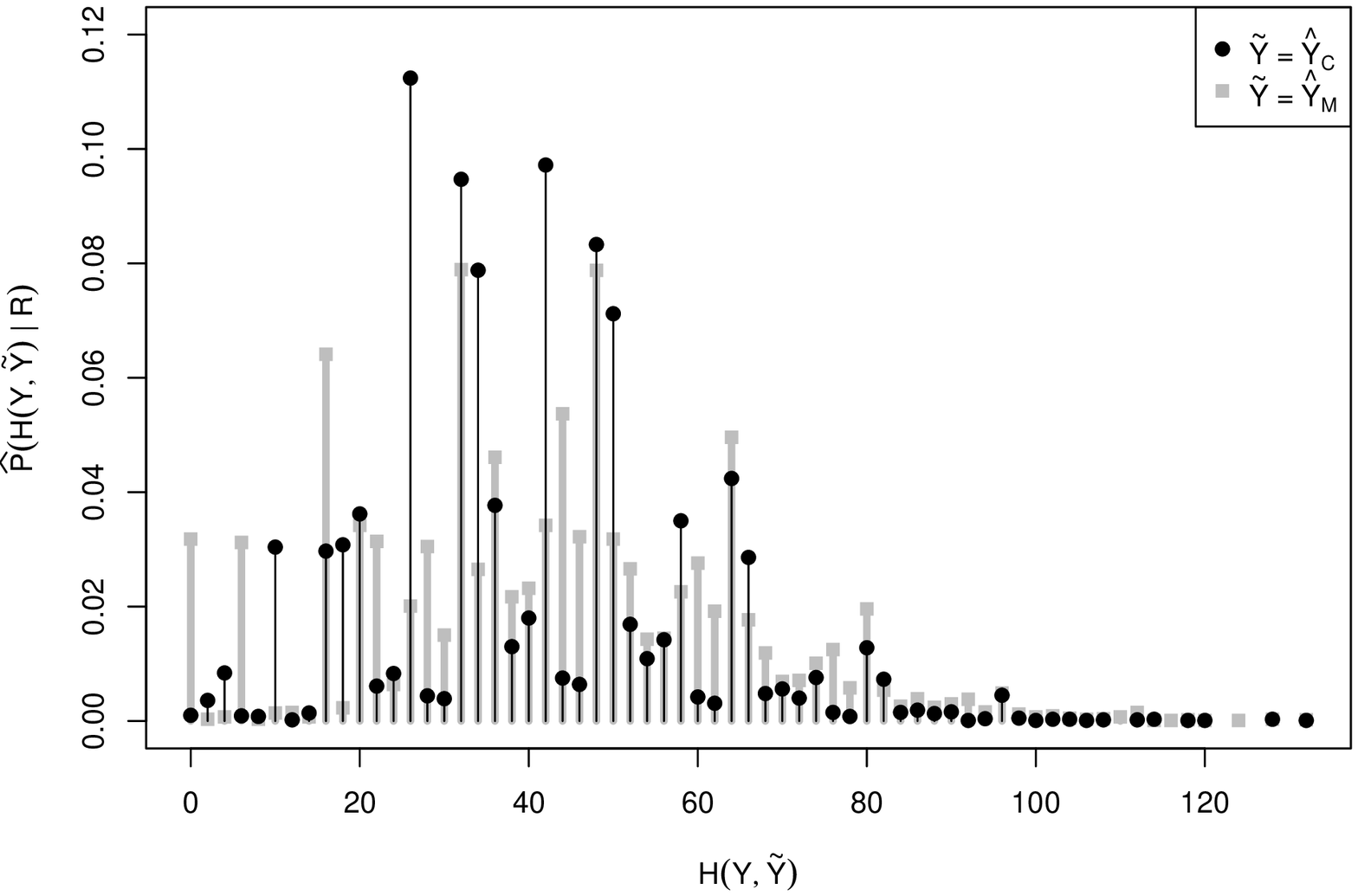}
\caption{Estimated posterior distribution of loss function centered at
$\tilde{Y}$ for the MAP ($\tilde{Y} = \hat{Y}_M$) and centroid ($\tilde{Y} =
\hat{Y}_C$) estimates.}
\label{fig:ex4hist}
\end{figure}

\end{example}

\section{Discussion}
In this paper we have presented a Bayesian approach, similar to the Gibbs
motif sampler in \citep{lawrence93,liu95}, that jointly models motif and
background compositions and binding site locations in a set of sequences. More
importantly, we discuss and formalize an inferential procedure based on the
centroid estimator proposed by \citet{carvalho08}. As in any Bayesian
analysis, we wish to evaluate features of interest in a model based on their
posterior distribution; however, if we are required to pick a representative
configuration, a point in the parameter space, then a principled approach is
to elect a loss function and conduct formal statistical decision analysis. In
this sense, by exploring a more refined loss function that depends on
position-wise comparisons between sequence states---background or motif
positions---we are able to identify a better representative of the posterior
space of binding site configurations. As pointed out in \citep{carvalho08},
the centroid estimator better accounts for the distribution of posterior mass;
it is more similar to a median than to a mode, and can thus offer better
predictive resolution than the MAP estimator \citep{barbieri04}. When applied
to motif discovery, the centroid estimator captures information in the
vicinity of binding site positions through a convolution in marginal posterior
distributions of binding sites.

Given the combinatorial number of possible configurations in the parameter
space it is not feasible to identify the centroid estimate through enumeration
or even a systematic approach. Yet, we devise an approximative scheme that
efficiently optimizes an upper bound on the posterior expected loss and thus
provides a related centroid. Despite its heuristic nature, the proposed method
has another advantage besides computational convenience: it allows for an
informative depiction of the posterior distribution on binding site
configurations. First, when defining the local centroids, we are able to
assess the contributions from each binding site through their marginal
posterior distributions conditional on the number of binding sites, and, in
particular, through the convolution of these marginal profiles with the gain
filter; secondly, when finding the global centroid we explore the marginal
posterior distribution on the number of binding sites. Moreover, other
representations might be helpful in understanding the distribution of
posterior mass, as in the use of $P_c$ (in Equation~\ref{eq:pc}) to pinpoint
the 1-global centroid and measure the overall support of the configurations to
a binding site at some specific position in the sequence. These comments are
in the spirit of an estimator being also a communicator of the posterior space
and the particular choice of prior distribution
\citep[see][Section 4.10]{berger85}.

It is important to note that even when the model is accurate, a poor inference
might fail in recovering relevant features of the space. In Example~2, the MAP
estimate is the null configuration, while the centroid indicates three binding
sites that represent a group of configurations that jointly pool significant
posterior mass. It is also common that the posterior distribution is too
complex to be reasonably captured by a single representative; in this case the
expected posterior loss could also be used to partition the space and further
define additional representatives as conditional estimates on each subspace.
This is a direction of work that warrants interest and that we intend to
follow next.

Further improvements can be obtained by specifying a more complex model that
accounts, for example, for higher order Markov chains with more states for the
background, as in \citep{roth98,liu01}, phylogenetic profiles
\citep{newberg07}, structural information \citep{xing04}, a variable motif
length, or dependency among motif positions. As pointed out by \citet{hu05},
motif discovery using sequence only is well known for low signal-to-noise
ratio; future extensions would also incorporate other data sources, such as
gene expression or ChIP-Seq data, to increase the signal-to-noise ratio.

\section*{Acknowledgements}
The author would like to thank Antonio Gomes for the helpful discussions and
comments in the text.

\bibliographystyle{chicago}
\bibliography{motif}

\begin{thebibliography}{}

\bibitem[\protect\citeauthoryear{Bailey and Elkan}{Bailey and
  Elkan}{1995}]{bailey95}
Bailey, T. and C.~Elkan (1995).
\newblock Unsupervised learning of multiple motifs in biopolymers using
  expectation maximization.
\newblock {\em Machine learning\/}~{\em 21\/}(1), 51--80.

\bibitem[\protect\citeauthoryear{Barbieri and Berger}{Barbieri and
  Berger}{2004}]{barbieri04}
Barbieri, M. and J.~Berger (2004).
\newblock Optimal predictive model selection.
\newblock {\em The Annals of Statistics\/}~{\em 32\/}(3), 870--897.

\bibitem[\protect\citeauthoryear{Berger}{Berger}{1985}]{berger85}
Berger, J. (1985).
\newblock {\em Statistical decision theory and Bayesian analysis}.
\newblock Springer.

\bibitem[\protect\citeauthoryear{Besag}{Besag}{1986}]{besag86}
Besag, J. (1986).
\newblock On the statistical analysis of dirty pictures.
\newblock {\em Journal of the Royal Statistical Society. Series B
  (Methodological)\/}~{\em 48\/}(3), 259--302.

\bibitem[\protect\citeauthoryear{Carvalho and Lawrence}{Carvalho and
  Lawrence}{2008}]{carvalho08}
Carvalho, L. and C.~Lawrence (2008).
\newblock Centroid estimation in discrete high-dimensional spaces with
  applications in biology.
\newblock {\em Proceedings of the National Academy of Sciences of the United
  States of America\/}~{\em 105\/}(9), 3209.

\bibitem[\protect\citeauthoryear{Dempster, Laird, and Rubin}{Dempster
  et~al.}{1977}]{dempster77}
Dempster, A., N.~Laird, and D.~Rubin (1977).
\newblock Maximum likelihood from incomplete data via the {EM} algorithm.
\newblock {\em Journal of the Royal Statistical Society. Series B
  (Methodological)\/}~{\em 39\/}(1), 1--38.

\bibitem[\protect\citeauthoryear{Ding, Chan, and Lawrence}{Ding
  et~al.}{2005}]{ding05}
Ding, Y., C.~Chan, and C.~Lawrence (2005).
\newblock {RNA} secondary structure prediction by centroids in a {Boltzmann}
  weighted ensemble.
\newblock {\em {RNA}\/}~{\em 11\/}(8), 1157--1166.

\bibitem[\protect\citeauthoryear{Geman and Geman}{Geman and
  Geman}{1984}]{geman84}
Geman, S. and D.~Geman (1984).
\newblock Stochastic relaxation, {Gibbs} distributions, and the {Bayesian}
  restoration of images.
\newblock {\em Pattern Analysis and Machine Intelligence, IEEE Transactions
  on\/}~{\em 6\/}(6), 721--741.

\bibitem[\protect\citeauthoryear{Gower}{Gower}{1966}]{gower66}
Gower, J. (1966).
\newblock Some distance properties of latent root and vector methods used in
  multivariate analysis.
\newblock {\em Biometrika\/}~{\em 53\/}(3-4), 325--338.

\bibitem[\protect\citeauthoryear{GuhaThakurta}{GuhaThakurta}{2006}]{guhathakurta06}
GuhaThakurta, D. (2006).
\newblock Computational identification of transcriptional regulatory elements
  in {DNA} sequence.
\newblock {\em Nucleic acids research\/}~{\em 34\/}(12), 3585--3598.

\bibitem[\protect\citeauthoryear{Hu, Li, and Kihara}{Hu et~al.}{2005}]{hu05}
Hu, J., B.~Li, and D.~Kihara (2005).
\newblock Limitations and potentials of current motif discovery algorithms.
\newblock {\em Nucleic acids research\/}~{\em 33\/}(15), 4899--4913.

\bibitem[\protect\citeauthoryear{Lawrence, Altschul, Boguski, Liu, Neuwald, and
  Wootton}{Lawrence et~al.}{1993}]{lawrence93}
Lawrence, C., S.~Altschul, M.~Boguski, J.~Liu, A.~Neuwald, and J.~Wootton
  (1993).
\newblock Detecting subtle sequence signals: a {Gibbs} sampling strategy for
  multiple alignment.
\newblock {\em Science\/}~{\em 262\/}(5131), 208--214.

\bibitem[\protect\citeauthoryear{Liu}{Liu}{1994}]{liu94}
Liu, J. (1994).
\newblock The collapsed {Gibbs} sampler in {Bayesian} computations with
  applications to a gene regulation problem.
\newblock {\em Journal of the American Statistical Association\/}~{\em
  89\/}(427), 958--966.

\bibitem[\protect\citeauthoryear{Liu}{Liu}{2008}]{liu08}
Liu, J. (2008).
\newblock {\em Monte Carlo strategies in scientific computing}.
\newblock Springer Verlag.

\bibitem[\protect\citeauthoryear{Liu, Neuwald, and Lawrence}{Liu
  et~al.}{1995}]{liu95}
Liu, J., A.~Neuwald, and C.~Lawrence (1995).
\newblock {Bayesian} models for multiple local sequence alignment and {Gibbs}
  sampling strategies.
\newblock {\em Journal of the American Statistical Association\/}~{\em
  90\/}(432), 1156--1170.

\bibitem[\protect\citeauthoryear{Liu, Brutlag, and Liu}{Liu
  et~al.}{2001}]{liu01}
Liu, X., D.~Brutlag, and J.~Liu (2001).
\newblock {BioProspector}: discovering conserved {DNA} motifs in upstream
  regulatory regions of co-expressed genes.
\newblock In {\em Pac Symp Biocomput}, Volume~6, pp.\  127--138.

\bibitem[\protect\citeauthoryear{MacIsaac and Fraenkel}{MacIsaac and
  Fraenkel}{2006}]{macisaac06}
MacIsaac, K. and E.~Fraenkel (2006).
\newblock Practical strategies for discovering regulatory {DNA} sequence
  motifs.
\newblock {\em PLoS computational biology\/}~{\em 2\/}(4), e36.

\bibitem[\protect\citeauthoryear{Murrea, McCawa, and Baltimorea}{Murrea
  et~al.}{1989}]{murrea89}
Murrea, C., P.~S. McCawa, and D.~Baltimorea (1989).
\newblock A new {DNA} binding and dimerization motif in immunoglobulin enhancer
  binding, daughterless, {MyoD}, and {Myc} proteins.
\newblock {\em Cell\/}~{\em 56\/}(5), 777--783.

\bibitem[\protect\citeauthoryear{Neuwald, Liu, and Lawrence}{Neuwald
  et~al.}{1995}]{neuwald95}
Neuwald, A., J.~Liu, and C.~Lawrence (1995).
\newblock {Gibbs} motif sampling: detection of bacterial outer membrane protein
  repeats.
\newblock {\em Protein science\/}~{\em 4\/}(8), 1618--1632.

\bibitem[\protect\citeauthoryear{Newberg, Thompson, Conlan, Smith, McCue, and
  Lawrence}{Newberg et~al.}{2007}]{newberg07}
Newberg, L., W.~Thompson, S.~Conlan, T.~Smith, L.~McCue, and C.~Lawrence
  (2007).
\newblock A phylogenetic {Gibbs} sampler that yields centroid solutions for
  cis-regulatory site prediction.
\newblock {\em Bioinformatics\/}~{\em 23\/}(14), 1718--1727.

\bibitem[\protect\citeauthoryear{Pavesi, Mereghetti, Mauri, and Pesole}{Pavesi
  et~al.}{2004}]{pavesi04}
Pavesi, G., P.~Mereghetti, G.~Mauri, and G.~Pesole (2004).
\newblock {Weeder} {Web}: discovery of transcription factor binding sites in a
  set of sequences from co-regulated genes.
\newblock {\em Nucleic acids research\/}~{\em 32\/}(suppl 2), W199--W203.

\bibitem[\protect\citeauthoryear{Pevzner, Sze, et~al.}{Pevzner
  et~al.}{2000}]{pevzner00}
Pevzner, P., S.~Sze, et~al. (2000).
\newblock Combinatorial approaches to finding subtle signals in {DNA}
  sequences.
\newblock In {\em Proceedings of the Eighth International Conference on
  Intelligent Systems for Molecular Biology}, Volume~8, pp.\  269--278.

\bibitem[\protect\citeauthoryear{R{\'e}gnier and Denise}{R{\'e}gnier and
  Denise}{2004}]{regnier04}
R{\'e}gnier, M. and A.~Denise (2004).
\newblock Rare events and conditional events on random strings.
\newblock {\em Discrete Mathematics and Theoretical Computer Science\/}~{\em
  6\/}(2), 191--214.

\bibitem[\protect\citeauthoryear{Roth, Hughes, Estep, and Church}{Roth
  et~al.}{1998}]{roth98}
Roth, F., J.~Hughes, P.~Estep, and G.~Church (1998).
\newblock Finding {DNA} regulatory motifs within unaligned noncoding sequences
  clustered by whole-genome {mRNA} quantitation.
\newblock {\em Nature biotechnology\/}~{\em 16\/}(10), 939--945.

\bibitem[\protect\citeauthoryear{Sandve and Drablos}{Sandve and
  Drablos}{2006}]{sandve06}
Sandve, G. and F.~Drablos (2006).
\newblock A survey of motif discovery methods in an integrated framework.
\newblock {\em Biol Direct\/}~{\em 1\/}(11).

\bibitem[\protect\citeauthoryear{Stormo}{Stormo}{2000}]{stormo00}
Stormo, G. (2000).
\newblock {DNA} binding sites: representation and discovery.
\newblock {\em Bioinformatics\/}~{\em 16\/}(1), 16--23.

\bibitem[\protect\citeauthoryear{Tanner and Wong}{Tanner and
  Wong}{1987}]{tanner87}
Tanner, M. and W.~Wong (1987).
\newblock The calculation of posterior distributions by data augmentation.
\newblock {\em Journal of the American statistical Association\/}~{\em
  82\/}(398), 528--540.

\bibitem[\protect\citeauthoryear{Thijs, Marchal, Lescot, Rombauts, De~Moor,
  Rouze, and Moreau}{Thijs et~al.}{2002}]{thijs02}
Thijs, G., K.~Marchal, M.~Lescot, S.~Rombauts, B.~De~Moor, P.~Rouze, and
  Y.~Moreau (2002).
\newblock A {Gibbs} sampling method to detect overrepresented motifs in the
  upstream regions of coexpressed genes.
\newblock {\em Journal of Computational Biology\/}~{\em 9\/}(2), 447--464.

\bibitem[\protect\citeauthoryear{Thompson, Newberg, Conlan, McCue, and
  Lawrence}{Thompson et~al.}{2007}]{thompson07}
Thompson, W., L.~Newberg, S.~Conlan, L.~McCue, and C.~Lawrence (2007).
\newblock The {Gibbs} centroid sampler.
\newblock {\em Nucleic acids research\/}~{\em 35\/}(suppl 2), W232--W237.

\bibitem[\protect\citeauthoryear{Tompa, Li, Bailey, Church, De~Moor, Eskin,
  Favorov, Frith, Fu, Kent, et~al.}{Tompa et~al.}{2005}]{tompa05}
Tompa, M., N.~Li, T.~Bailey, G.~Church, B.~De~Moor, E.~Eskin, A.~Favorov,
  M.~Frith, Y.~Fu, W.~Kent, et~al. (2005).
\newblock Assessing computational tools for the discovery of transcription
  factor binding sites.
\newblock {\em Nature biotechnology\/}~{\em 23\/}(1), 137--144.

\bibitem[\protect\citeauthoryear{Xing and Karp}{Xing and Karp}{2004}]{xing04}
Xing, E. and R.~Karp (2004).
\newblock {MotifPrototyper}: a bayesian profile model for motif families.
\newblock {\em Proceedings of the National Academy of Sciences of the United
  States of America\/}~{\em 101\/}(29), 10523.

\end{thebibliography}

\end{document}